\newtheorem{theorem}{Theorem}
\newtheorem{corollary}{Corollary}
\newtheorem{remark}{Remark}
\newcommand{\be}{\begin{eqnarray}}
\newcommand{\ee}{\end{eqnarray}}
\newcommand{\bee}{\begin{eqnarray*}}
\newcommand{\eee}{\end{eqnarray*}}
\newcommand{\R}{{\mathbb R}}
\newcommand{\Lcost}{\mbox {\rm L}}
\newcommand{\E}{{\mu}}
\newcommand{\sn}{\mbox {\rm sn}}
\newcommand{\cn}{\mbox {\rm cn}}
\newcommand{\dn}{\mbox {\rm dn}}
\newcommand{\Npsi}{\phi}
\begin{document}
 
 \title [NLS band functions]{Stationary solutions to cubic nonlinear Schr\"odinger equations with quasi-periodic boundary conditions}
 
 \author {
 Andrea Sacchetti
 }

\address {
Department of Physics, Informatics and Mathematics, University of Modena and Reggio Emilia, Modena, Italy.
}


\date {\today}

\thanks {This work is partially supported by GNFM-INdAM and by the UniMoRe-FIM project ``Modelli e metodi della Fisica Matematica''.}

\begin {abstract} 
In this paper we give the \emph {quantization rules} to determine the normalized stationary solutions to the cubic nonlinear 
Schr\"odinger equation with quasi-periodic conditions on a given interval. \ Similarly to what happen in the Floquet's theory for linear periodic operators, also 
in this case some kind of band functions there exist.
\end{abstract}

\maketitle

\section {Introduction}

Nonlinear one-dimensional Schr\"odinger equations with cubic nonlinearity (hereafter NLS)
\be
i \hbar \frac {\partial \psi}{\partial t} = - \frac {\hbar^2}{2m} \frac {\partial^2 \psi}{\partial x^2} + \alpha |\psi |^2 \psi \label {Eq0}
\ee
on a finite interval $I=[0,a]$, for some $a>0$ fixed, may be of physical interest in the study of Bose-Einstein condensates trapped in a circular wave-guide (see, e.g., 
\cite {BR,Gupta,Morizot}, see also \cite {Fibich} for a review). \ Wide attentions to the NLS (\ref {Eq0}) on a finite interval have been given from a mathematical point 
of view, with particular emphasis to the analysis of the existence and stability of \emph {standing waves} solutions of the form 
$\psi (x,t) = e^{-i\mu t /\hbar} \Npsi (x)$  under different boundary conditions \cite {Angulo,Gallay1,Gallay2,LandauWilde,
Robinson,Rowlands}. \ In fact, the function $\Npsi (x)$ is a normalized solution to the cubic time independent NLS (hereafter $'$ 
denotes the derivative $\frac {\partial }{\partial x}$ and, for sake of simplicity, we fix the units such that $\hbar =1$ and $2m=1$):
\be
-\Npsi''  + \alpha |\Npsi |^2 \Npsi = {\E} \Npsi \, , \ \| \Npsi \|_{L^2 (I,dx)} =1  \, ,  \label {Eq1}
\ee
and the boundary conditions considered in the above mentioned papers are the following ones: \emph {periodic boundary conditions} (i.e. $\Npsi (0) = \Npsi (a)$ 
and $\Npsi ' (0)=\Npsi ' (a)$), \emph {Dirichlet boundary conditions} (i.e. $\Npsi (0)=\Npsi (a)=0$), \emph {Neumann boundary conditions} (i.e. $\Npsi ' (0)=
\Npsi '(a)=0$) and \emph {$\sigma$-walls boundary conditions} (i.e. $\Npsi ' (0) =\sigma \Npsi (0)$ and $\Npsi '(a) =-\sigma \Npsi (a)$) 
where the walls are repulsive if $\sigma >0$ and attractive when $\sigma <0$. \ In a couple of seminal papers Carr, Clark and Reinhardt \cite {Carr1,Carr2} studied the 
stationary solutions to (\ref {Eq1}) on a torus, that is with periodic boundary conditions, both in the case of attractive and repulsive cubic nonlinearities,  
 and a key ingredient in their analysis was the use of the fundamental solution to a cubic NLS expressed through elliptic functions (see also \cite {POC,Carr3}). 

This paper is addressed to the study of the normalized stationary solutions $\Npsi (x)$ to equation (\ref {Eq1}) with \emph {quasi-periodic boundary conditions} on the 
interval $I=[0,a]$
\be
\left \{
\begin {array}{lcl}
 \Npsi (a) &=& e^{ika} \Npsi (0) \\ 
 \Npsi' (a) &=& e^{ika} \Npsi' (0)
\end {array}
\right. \, , \ k \in \R \, .
\label {QP}
\ee

In the following, for argument's sake, we choose $a=1$. 

Similarly to what happens in the Floquet's theory for linear periodic operators \cite {Kohn}, even in this case we expect that it is possible to obtain an implicit relationship 
between the ``energy'' $ \E $ associated to the stationary solution and the ``quasimomentum'' variable $k$ that characterizes the quasi-periodic boundary 
conditions. \ Eventually, 
some analogies between the NLS equation (\ref {Eq1}) with quasi-periodic boundary conditions (\ref {QP}) and the Floquet's theory occur; for instance, in additions 
to plane wave solutions associated to the ``energy'' $\E = k^2 + \alpha$, other quasi-periodic solutions there exists for some values of the energy $\E \in [\E^m ,\E^M ]$ 
and of the quasimomentum $k \in [k^m ,k^M]$. \ The intervals $[\E^m , \E^M ]$ and $[k^m ,k^M]$ will depend on $\alpha$ and their amplitude is not zero when 
$\alpha \not=0$. \ Finally, we give the algorithm for the computation of $k=k(\E )$, when the ``energy'' $\E$ belongs to the ``energy band'' $[\E^m , \E^M ]$, and 
the numerical inversion of such a relation gives the ``dispersion relation'' $\E = \E (k)$. \ The names ``energy band'', ``quasimomentum'', ``dispersion relation'', etc., 
are adopted by the Floquet's theory.

The paper is organized as follows. \ In \S \ref {Sec2} we collect some preliminary remarks. \ In \S \ref {Sec3} we give the expression of the general solution to 
(\ref {Eq1}) with quasi-periodic boundary conditions (\ref {QP}) and we compute the ``energy band'' $[\E^m , \E^M ]$ and the associated interval $[k^m , k^M ]$ of 
values for the ``quasimomentum`` $k$ to whom a normalized solution to (\ref {Eq1}) with boundary conditions (\ref {QP}) there exists. \ In particular, we also see 
that when the energy takes a value $\E^m$ or $\E^M$ at the edge of the energy band then we recover well known solutions. \ Finally, in Appendix \ref {AppA} we collect 
some fundamental formulas concerning Jacobian elliptic functions.

\section {Preliminary remarks} \label {Sec2}

\begin {remark}
If $\Npsi (x)$ is a solution to (\ref {Eq1}) and (\ref {QP}) associated to an energy ${\E}$ and to a quasimomentum $k$ then the complex conjugate $\overline {\Npsi (x)}$ 
is still a solution to (\ref {Eq1}) and (\ref {QP}) associated to the same energy ${\E}$ and to the opposite quasimomentum $-k$. \ Therefore, we may restrict our attention 
to the case $k>0$.
\end {remark}

\begin {remark} We recall that if $\Npsi \in H^2 (I)$ is a solution to the differential equation (\ref {Eq1}) when $I=\R$ then 
(see Lemma 3.7 \cite {P}) $\Npsi$ is, up to a phase factor, a real-valued solution. \ We must remark that this regularity result does not hold true when $I=[0,1]$ is a 
finite interval and thus we actually may have complex-valued solutions to equation (\ref {Eq1}) with quasi-periodic boundary conditions (\ref {QP}).
\end {remark}

\begin {remark} \label {Nota1}
Equation (\ref {Eq1}), with quasi-periodic boundary conditions (\ref {QP}), always admits plane wave solutions of the form $\Npsi (x) =e^{\pm i \sqrt {\E - \alpha} x}$ 
where $\E =k^2 +\alpha$.
\end {remark}

\begin {remark} \label {Reale}
When one looks for real valued solutions then equation (\ref {Eq1}) takes the form 
\bee
- \Npsi '' + \alpha \Npsi^3 = {\E} \Npsi \,  , 
\eee
and it has a periodic solution (see Ch. 7, \S 10 \cite {Davis}, see also \cite {AS})
\be
\Npsi (x) = \frac {1}{\sqrt {\alpha}} t \sqrt {\frac {2{\E}}{1+t^2}} \sn \left ( (x-x_0) \sqrt {\frac {{\E}}{1+t^2}}; t \right ) \, , \label {reale}
\ee
where $\sn(x;t)$ is an Jacobian elliptic function with parameters $x_0 \in \R$ and $t \in [0,1)$ and real period $4 K(t)$, where $K(t)$ is the complete first elliptic 
integral. \ Making use of some formulas for $\sn (x;t)$ one can gives other forms to the general solution; e.g., instead of (\ref {reale}) the general solution may be written as 
\be
\Npsi (x) = \sqrt {\frac {-2 {\E} t^2}{\alpha}}  \cn \left ( (x-x_0) \sqrt {\frac {{\E}}{1-2t^2}}; t \right ) \, , \label {realeBis}
\ee
or
\be
\Npsi (x) = \sqrt {\frac {2 {\E}}{\alpha (2-t)}}  \dn \left ( (x-x_0) \sqrt {\frac {{\E}}{t-2}}; t \right ) \, , \label {realeTer}
\ee
where $\cn (x;t)$ and $\dn (x;t)$ are Jacobian elliptic functions.
\end {remark}

\begin {remark}
In the case of periodic boundary conditions then the solution has the form (\ref {reale}) where $\E$ is given by 
\bee
\E  = 16 (n+1)^2 K^2 (t) (1+t^2) \, , \ n=0,1,2,\ldots \, . 
\eee
On the other hand, in the case of out of phase boundary conditions, that is when $k=(2n+1) \pi$, then the solution is still given
by (\ref {reale}) where 
\bee
\E  = 4 (2n +1)^2 K^2 (t) (1+t^2) \, . 
\eee
In both cases the value of the parameter $t$ must be such that the normalization condition holds true.
\end {remark}

\begin {remark} \label {shift}
If $\Npsi (x)$ is a solution to equations (\ref {Eq1}) and (\ref {QP}) then $\Npsi_{x_0} (x) = \Npsi (x-x_0)$ is a solution to equations (\ref {Eq1}) and 
(\ref {QP}), too. \ Indeed, $u(x) = e^{-ikx} \Npsi (x)$ is a periodic function with period $1$ and then 
\bee
e^{-ikx} \Npsi_{x_0} (x) = e^{-ikx_0} e^{-ik(x-x_0)} \Npsi (x-x_0) = e^{-ikx_0} u (x-x_0)
\eee
is a periodic function with period $1$, too.
\end {remark}

\section {Solution to (\ref {Eq1}) with boundary conditions (\ref {QP})} \label {Sec3}

\subsection {Preliminaries}

Following the approach proposed by \cite {Carr1,Carr2} we consider the Madelung transform $\Npsi (x) = \rho (x) e^{i \theta (x)}$, then equation (\ref {Eq1}) 
takes the form  
\be
\left \{ 
\begin {array}{l}
- (\rho'' - \rho {\theta'}^2) + \alpha \rho^3 = {\E} \rho \\ 
2\rho' \theta' +  \rho \theta '' =0 
\end {array}
\right. \, . \label {Eq2}
\ee
The second equation implies that $\rho^2 \theta ' = C_1$, where $C_1 = \rho_0^2 \theta_0'$ is a constant of integration. 

Hereafter we assume, for argument's sake, that 
\bee
\theta_0 = \theta (0) = 0 \, . 
\eee

\begin {remark} 
If $C_1=0$ then $\theta (t) \equiv \theta_0$ and $\Npsi$ is, up to a phase factor, a real valued solution already discussed in Remark \ref {Reale}; indeed, 
equation (\ref {Eq2}) reduces to $-\rho'' + \alpha \rho^3 = \E \rho$.
\end {remark} 

Hereafter, we'll consider the case $C_1 \not= 0$. \ In such a case $\rho (x)$ never takes zero values and 
\be
\theta (x) = C_1 \int_0^x \frac {1}{\rho^2(u)} du  \, , \ C_1 \not= 0 \, . \label {Eqtheta}
\ee
Because of the quasi periodic boundary conditions (\ref {QP}) it follows that $\rho (x)$ 
is a non negative solution to 
\be
- \left (\rho'' - \frac {C_1^2}{\rho^3} \right )  + \alpha \rho^3 = {\E} \rho \label {Eq4}
\ee
with \emph {periodic boundary conditions} 
\be
\rho (1) = \rho (0) \ \mbox { and } \ \rho' (1) = \rho' (0)\, .  \label {Eq8Bis}
\ee
The two parameters ${\E}$ and $C_1$ must satisfy to the condition $\theta (1) =k$, i.e. 
\be
C_1 \int_0^1 \frac {1}{\rho^2 (x;C_1,{\E})} dx =k\, , \label {Eq4BIS}
\ee
because of (\ref {QP}), and to the normalization condition, that is 
\be
\int_0^1 {\rho^2 (x;C_1,{\E})} dx =1 \, .  \label {Eq4TER}
\ee

\begin {remark} \label {Rem1}
If one look for constant solutions to (\ref {Eq4}) then $\rho \equiv 1$, because of the normalization condition (\ref {Eq4TER}), $C_1 =k$, because 
of (\ref {Eq4BIS}), $\theta (x) = k x $ and equation (\ref {Eq4}) reduces to 
\be
 \frac {C_1^2}{\rho^3} + \alpha \rho^3  = {\E} \rho  \, . \label {Eq2Bis}
\ee
Therefore, since $\rho \equiv 1$, it follows that ${\E} = k^2+ \alpha$ and $\Npsi (x)=e^{ik x}$ is the plane wave solution already discussed in Remark \ref {Nota1}.
\end {remark}

In order to look for non constant solutions we remark that equation (\ref {Eq4}) can be solved by means of a simple squaring; indeed, if $\rho$ is not a 
constant function (we have already discussed this case in Remark \ref {Rem1}) then (\ref {Eq4}) reduces to 
\be
-\frac 12 {\rho'}^2 - \frac 12 \frac {C_1^2}{\rho^2} + \frac 14 \alpha \rho^4  - \frac 12 {\E} \rho^2 = C_2 \label {Eq5Bis}
\ee
where $C_2$ is a constant of integration. \ In we set $z=\rho^2$ then $z(x;C_1,C_2,{\E})$ is a non negative solution to the equation  
\be
{z'}^2 = f(z) \ \mbox { where } \ f(z) =  b z^3 + c z^2 + d z + e \label {A3Bis}
\ee
with periodic boundary conditions $z(0)=z(1)$, where  
\bee
b = 2 \alpha \, , \ c = - 4 {\E} \, , \ d = - 8 C_2 \ \mbox { and } \ e = -4 C_1^2.
\eee
It is well known that the general solution to (\ref {A3Bis}) has the form \cite {Davis}
\be
z (x) = A \sn^2 (q x+x_0 ;t) + B \, \ \mbox { with period } \ T_\ell =\frac {2\ell K(t)}{q} \, , \ \ell =1,2, \ldots \, , \label {EqSol}
\ee
for some $A$, $B$, $q$, $x_0$ and $t$, under the constraints
\bee
C_1^2 > 0 \, ,\ B >0 \ \mbox { and } \ A>-B \, ,
\eee
because we assumed that $C_1 \not= 0$ and that $z(x)$ never takes zero values.

For argument's sake we can always assume that (see Remark \ref {shift}) 
\bee
x_0=0
\eee
by means of a translation argument $x\to x -x_0/q$. 

\begin {remark} \label {Prima}
In the following we restrict our attention to the case of $\ell =1$ in (\ref {EqSol}) and we denote by 
\bee
T :=T_1 = \frac {2K(t)}{q} 
\eee
the corresponding period of the solution $z(x)$. \ In such a way we'll obtain the first ``band function'' $\E := \E_1 = \E_1 (k)$. \ For 
different values of $\ell =2,3, \ldots $ we'll have the other ``band functions'' $\E_\ell (k)$.
\end {remark}

\begin {remark}
One can obtain a general solution to equation (\ref {A3Bis}) even when $ f(z) =  az^4+b z^3 + c z^2 + d z + e$ is a fourth degree polynomial with $a\not= 0$; this case corresponds 
to the cubic/quintic NLS
\bee
-\Npsi'' + \alpha |\Npsi|^2 \Npsi + \beta |\Npsi|^4 \Npsi = {\E} \Npsi \, ,
\eee 
where $a= \frac 43 \beta$. \ Indeed, let any $z_0 >0$ be fixed; then it is known that when $f(z)$ is a quartic polynomial with non repeating factors then 
equation (\ref {A3Bis}) has a general solution given by $z(x) = \zeta (\pm x)$ where
\be
\zeta (x) = z_0 + \frac {\sqrt {f(z_0)} P' (x) + \frac 12 \dot f (z_0) \left [ P(x) - \frac {1}{24} 
\ddot f (z_0) \right ] + \frac {1}{24} f(z_0) f^{(3)} (z_0)}{2\left [ P(x) - \frac {1}{24} 
\ddot f (z_0) \right ]^2 - \frac {1}{48} f(z_0) f^{(IV)} (z_0)}
\label {A5}
\ee
where $' = \frac {d}{dx}$ denotes the derivative with respect to $x$ and $\dot {} =\frac {d}{dz}$ denotes the derivative with respect to $z$, and 
where $P(x) = P(x;g_2,g_3)$ is the Weierstrass's elliptic function with parameters 
\bee 
g_2 = ae - \frac 14 bd +\frac {1}{12} c^2 \mbox { and } \ g_3 = - \frac {1}{16} eb^2 + \frac 16 eac - \frac {1}{16} ad^2 + \frac {1}{48} dbc -
\frac {1}{216}c^3. 
\eee
This is an old and, as far as I know, almost unknown result due to Weierstrass. \ It was
 published in 1865, in an inaugural dissertation at Berlin, by Biermann \cite {Bie}, who ascribed it to Weierstrass; it was then mentioned in the book 
by Whittaker and Watson \cite {WW} in Ch. XX, Example 2, p.454. \ When $\beta =0$ then it is possible to prove that (\ref {A5}) reduces to (\ref {EqSol}).
\end {remark}

Recalling (\ref {Eq8Bis}), that is the period $T$ of the solution $\rho (x)$ must be equal to $1$, the normalization condition 
\bee
1 = \int_0^1 z(x) dx = \int_0^1 \left [ A \sn^2 (qx;t) + B \right ] dx = AF_1(t) + B \, , 
\eee
where we set 
\bee
F_1(t):= \int_0^1  {\sn}^2 (qx;t) dx \, , 
\eee
and by substituting (\ref {EqSol}) in (\ref {A3Bis}) and equating the coefficients of the same power of the function $ \sn^2 (qx ;t)$, it follows that
\be
q & = & 2 K(t) \label {q1} \\
A & = & \frac {1}{\alpha} 2 q^2 t^2 =  \frac {8}{\alpha} K^2(t) t^2 \label {q2} \\ 
B & = & 1- \frac {8K^2(t) t^2 F_1(t)}{\alpha} \label {q3} \\ 
{\E}  & = &  G(t) + \frac 32 \alpha \label {q6} \\ 
C_1^2 & = & \frac {B}{4} (A+B)(2\alpha B + 4 q^2)  \label {q5} 
\ee
where we set
\be
G(t):=4 K^2(t) \left [ \left ( 1 + t^2 \right ) -3 t^2  F_1(t) \right ] \, . \label {pippo}
\ee
Finally, the constant of integration $C_2$ is given by
\bee
C_2  =  -\frac 12 \alpha A B - B q^2 - \frac 34 \alpha B^2 - \frac 12 A q^2 \, . 
\eee

 \begin {remark} \label {Inv}
 We may remark that $G(t)$ is a monotone decreasing function such that
 \bee
 \lim_{t\to 0 } G(t)= 4 K^2 (0) = \pi^2 \ \mbox { and } \  \lim_{t\to 1 } G(t) =- \infty \, . 
 \eee
 \end {remark}

For an explicit formula of the term $F_1 (t)$ we refer to formula (\ref {App1}) in Appendix. 

\subsection {General solution}

From (\ref {q6}) we obtain the equation ${\E} = {\E} (t)$, because of Remark \ref {Inv} we may invert 
such an equation obtaining $t=t({\E} ) \in [0,1)$ and finally $k=k({\E} )$; the inversion of such a latter relation will give the first (because we chose $\ell =1$, see 
Remark \ref {Prima}) ``band function'' $\E (k)$. 
 
 We collect all these results in the following statement.
 
 \begin {theorem} \label {Teo1}
 Let ${\E}\in \R$ and $\alpha \in \R$, $\alpha \not= 0$, be fixed. \ Let $q$, $A$, $B$ and $C_1$ given by (\ref {q1}), (\ref {q2}), (\ref {q3}) and (\ref {q5}). \ Let 
 $t \in [0,1)$ be a solution to the following ``quantization rule''
 \be
 \left \{
 \begin {array}{lcl}
 {\E} &=& G(t) + \frac 32 \alpha \\ 
 B(t) &>& 0 \\
 A (t) &>& - B (t) \\ 
 C_1^2(t) &> & 0 
 \end {array}
 \right. \, . \label {Eq21Bis}
 \ee
 Then  equation (\ref {Eq1}) with quasi-periodic boundary conditions (\ref {QP}) has a solution of the form $\Npsi (x) = \rho (x) e^{i\theta (x)}$ where $\rho (x)$ is 
 a positive  function given by 
 \bee
 \rho (x) = \sqrt {A \sn^2 (qx;t) + B} 
 \eee
 and where 
 \bee
 \theta (x) = C_1 \int_0^x \frac {du}{\rho^2 (u)} \, . 
 \eee
 \end {theorem}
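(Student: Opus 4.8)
The plan is to prove the theorem by \emph{verification}: the coefficient-matching carried out before the statement already produced the formulas (\ref{q1})--(\ref{q5}) as \emph{necessary} conditions for a solution of the shape (\ref{EqSol}), so what remains is to check that, conversely, the functions $\rho$ and $\theta$ built from any $t\in[0,1)$ satisfying the quantization rule (\ref{Eq21Bis}) genuinely solve (\ref{Eq1}) together with (\ref{QP}). Concretely, I would run the Madelung reduction of Section \ref{Sec3} backwards: set $z(x)=A\sn^2(qx;t)+B$, $\rho(x)=\sqrt{z(x)}$ and $\theta(x)=C_1\int_0^x \rho^{-2}(u)\,du$, and then show in turn that $\rho$ is a positive $1$-periodic solution of (\ref{Eq4}), that the normalization holds, and that $\Npsi=\rho e^{i\theta}$ inherits the quasi-periodic boundary conditions with $k=\theta(1)$.

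First I would use the two inequalities $B>0$ and $A>-B$ to control the range of $z$: since $\sn^2(\cdot\,;t)$ takes values in $[0,1]$, the function $z$ oscillates between $B$ and $A+B$, both of which are positive under (\ref{Eq21Bis}); hence $z>0$ everywhere, $\rho=\sqrt z$ is smooth and strictly positive, and $\theta$ is well defined. Next, the substitution of (\ref{EqSol}) into (\ref{A3Bis}) that generated (\ref{q1})--(\ref{q5}) shows precisely that $z$ satisfies the first-order equation $(z')^2=f(z)$ with $f$ as in (\ref{A3Bis}); rewriting this in terms of $\rho$ reproduces the first integral (\ref{Eq5Bis}), and differentiating (\ref{Eq5Bis}) and cancelling the factor $\rho'$ recovers the second-order amplitude equation (\ref{Eq4}). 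Because $\rho'$ vanishes only at the isolated zeros of $\sn\cn\dn$, the identity (\ref{Eq4}) extends to those points by continuity. The choice $q=2K(t)$ in (\ref{q1}) forces the period of $\sn^2(qx;t)$ to be $2K(t)/q=1$, which yields the periodic conditions (\ref{Eq8Bis}) for $\rho$ and $\rho'$; the definition (\ref{q3}) of $B$ is exactly the statement $\int_0^1 z\,dx=A F_1(t)+B=1$, so the normalization (\ref{Eq4TER}) holds; and $\theta(x)=C_1\int_0^x\rho^{-2}$ satisfies $\rho^2\theta'=C_1$ by construction, i.e.\ the second Madelung equation in (\ref{Eq2}). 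Taken together these facts make $\Npsi=\rho e^{i\theta}$ a solution of (\ref{Eq1}), and $C_1^2>0$ guarantees we are in the genuinely complex, non-constant regime in which this reduction is valid.

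It then remains to confirm the boundary conditions. Setting $k:=\theta(1)=C_1\int_0^1\rho^{-2}(u)\,du$ and using $\theta(0)=0$, the periodicity of $\rho$ gives $\Npsi(1)=\rho(1)e^{i\theta(1)}=\rho(0)e^{ik}=e^{ik}\Npsi(0)$; differentiating $\Npsi=\rho e^{i\theta}$ and using that $\rho'$ and $\theta'=C_1/\rho^2$ are also $1$-periodic gives $\Npsi'(1)=e^{ik}\Npsi'(0)$, so (\ref{QP}) holds for this $k$. I expect the step requiring the most care to be exactly the passage from the squared equation $(z')^2=f(z)$ back to the second-order equation (\ref{Eq4})—that is, arguing that the first integral, which is what the coefficient-matching literally delivers, forces the full ODE and not merely a relation valid away from turning points. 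The remaining subtlety is conceptual rather than computational: the preamble established necessity (any solution of the assumed shape must obey (\ref{q1})--(\ref{q5})), whereas the theorem asserts sufficiency, so the proof's real content is this verification that the constructed $\rho$, $\theta$ close the loop, with the admissibility inequalities in (\ref{Eq21Bis}) ensuring positivity of $\rho$ and $C_1\neq0$.
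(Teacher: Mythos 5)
Your proposal is correct and follows essentially the same route as the paper: the paper gives no separate proof of Theorem \ref{Teo1}, instead ``collecting'' the preceding Madelung reduction and coefficient matching, and your verification simply runs that derivation backwards to establish the sufficiency direction that the paper leaves implicit. The points you single out for care --- positivity of $z$ from $B>0$ and $A>-B$, recovering (\ref{Eq4}) from the first integral by cancelling $\rho'$ and extending by continuity, periodicity from $q=2K(t)$, normalization from the definition of $B$, and $k:=\theta(1)$ --- are exactly the facts the paper's derivation supplies, so nothing is missing.
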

 
 Theorem \ref {Teo1} provides some restrictions to the values allowed by energy ${\E}$. \ Now, we are going to see how these constraints work in the case of 
 attractive and repulsive nonlinearities.
 
 \begin {theorem} \label {Teo2}
 For any $\alpha \in \R$, $\alpha \not=0$, then (\ref {Eq21Bis}) has just one solution $t \in [0,1)$ for any value ${\E}\in ({\E}^m, {\E}^M)$ where 
\bee
{\E}^m= G(t^m) + \frac 32 \alpha \ \mbox { and } \ {\E}^M =  G(t^M) + \frac 32 \alpha \, , 
\eee
and where $t^m$ and $t^M$ are given by (\ref {tm_neg}) in the case of attractive nonlinearity $\alpha <0$, and by (\ref {tm_pos}) in 
the case of repulsive nonlinearity $\alpha >0$.
\end {theorem}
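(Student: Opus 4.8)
The plan is to decouple \textbf{uniqueness} from \textbf{existence}: the first line of (\ref{Eq21Bis}) pins down a single candidate $t$, and the remaining three inequalities then decide for which energies that candidate is admissible. For uniqueness I would invoke Remark~\ref{Inv} directly: the map $t \mapsto G(t)$ is strictly decreasing from $G(0)=\pi^2$ to $-\infty$ on $[0,1)$, so $\E = G(t) + \frac{3}{2}\alpha$ defines a strictly decreasing bijection between $t\in[0,1)$ and $\E\in(-\infty,\pi^2+\frac{3}{2}\alpha]$. Hence for each such $\E$ there is \emph{at most} one $t=t(\E)$ solving the full system, and $t(\E)$ is itself a decreasing function of $\E$. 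This settles the ``just one solution'' assertion, and the entire remaining content is to determine for which $\E$ the unique root $t(\E)$ of the first equation also satisfies $B(t)>0$, $A(t)>-B(t)$ and $C_1^2(t)>0$.

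Next I would rewrite the three inequalities purely in terms of $t$ using (\ref{q1})--(\ref{q5}). Writing $A+B = 1 + \frac{8K^2(t)t^2(1-F_1(t))}{\alpha}$ and $2\alpha B + 4q^2 = 2\alpha + 16K^2(t)\left(1-t^2F_1(t)\right)$, and recalling $0\le F_1(t)\le 1$, the constraints split according to $\mathrm{sign}(\alpha)$. For $\alpha>0$ the condition $A+B>0$ is automatic, $B>0$ forces the factor $2\alpha B + 4q^2>0$, and so the single binding condition is $B(t)>0$, i.e. $8K^2(t)t^2F_1(t)<\alpha$. For $\alpha<0$ one has $B(t)\ge 1>0$ automatically, while $A>-B$ and $C_1^2>0$ reduce to the pair $A+B>0$ and $2\alpha B + 4q^2>0$, i.e. $8K^2(t)t^2(1-F_1(t))<|\alpha|$ and $8K^2(t)\left(1-t^2F_1(t)\right)>|\alpha|$. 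In each case the aim is to show that the admissible set of $t$ is a single subinterval of $[0,1)$ whose endpoints $t^m,t^M$ solve the corresponding boundary equations; these are exactly (\ref{tm_pos}) and (\ref{tm_neg}).

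The main obstacle, and the only real analytic content beyond this bookkeeping, is proving that the admissible set is \emph{connected}, i.e. that each binding constraint function changes sign exactly once on $[0,1)$. For this I would establish the monotonicity and boundary behaviour of the three auxiliary functions $t\mapsto K^2(t)t^2F_1(t)$, $t\mapsto K^2(t)t^2\left(1-F_1(t)\right)$ and $t\mapsto K^2(t)\left(1-t^2F_1(t)\right)$, using the explicit expression for $F_1(t)$ from (\ref{App1}) together with the standard facts that $K(t)$ is increasing with $K(0)=\frac{\pi}{2}$ and $K(t)\to+\infty$ as $t\to1$, and that $F_1(t)\to 1$ as $t\to1$ (so that $t^2F_1\to1$ and the $\infty\cdot 0$ products must be controlled by the known logarithmic blow-up of $K$). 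The delicate point is the $\alpha<0$ case, where the second constraint $8K^2(1-t^2F_1)>|\alpha|$ may already fail at $t=0$ when $|\alpha|$ is large; I expect to treat this by showing the relevant function is monotone (or unimodal with a single relevant crossing), so that the feasible $t$-interval and hence $t^m,t^M$ remain well defined.

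Finally, once each binding constraint is shown to vanish at a single point, the endpoints $t^m,t^M$ are determined and the admissible $t$-set is an interval. Transporting it through the decreasing bijection $t\mapsto G(t)+\frac{3}{2}\alpha$ from the first step converts it into the energy interval, giving $\E^m = G(t^m)+\frac{3}{2}\alpha$ and $\E^M = G(t^M)+\frac{3}{2}\alpha$; combined with the at-most-one-root property this yields exactly one solution $t\in[0,1)$ for every $\E\in(\E^m,\E^M)$, which is the assertion of the theorem.
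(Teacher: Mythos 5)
Your proposal is correct and follows essentially the same route as the paper: uniqueness comes from the strict monotonicity of $G$ (Remark \ref{Inv}), and the admissible energy interval is obtained by reducing $B>0$, $A>-B$, $C_1^2>0$ to sign conditions on the monotone increasing functions $8K^2(t)t^2F_1(t)$, $8K^2(t)t^2F_2(t)$ and $8K^2(t)\bigl(1-t^2F_1(t)\bigr)$, whose unique crossings of the level $|\alpha|$ define $t^m$ and $t^M$ exactly as in (\ref{tm_neg}) and (\ref{tm_pos}). The only substantive item you defer (monotonicity and limits of these auxiliary functions, and the ordering $t_1<t_2$ in the attractive case, which the paper gets from $8K^2(t)(1-t^2F_1)-8K^2(t)t^2F_2=8K^2(t)(1-t^2)>0$) is likewise asserted rather than proved in the paper itself.
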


\begin {proof}
Let us consider, at first, the case of attractive nonlinearity, i.e. $\alpha <0$. \ In such a case $B>0$ is always satisfied; furthermore condition $A>-B$ implies 
that 
\bee
A(1-F_1) >-1 \ \mbox { that is } \  8 K^2 (t) t^2 F_2 (t) + \alpha < 0\, ,
 \eee
 where
\bee
F_2(t):= 1- F_1 (t) = \int_0^1  {\cn}^2 (qx;t) dx \, . 
\eee
 Condition $C_1^2 > 0$, under the constraints $B>0$ and $A+B>0$, becomes 
 \bee
 (2\alpha B + 4 q^2 )= 2\alpha + 16 K^2 (t) \left ( 1- t^2 F_1 (t) \right ) > 0 \, . 
 \eee
 In conclusion, when $\alpha <0$ the quantization rule reads as
 \be
 \left \{
 \begin {array}{l}
 {\E} = G(t) + \frac 32 \alpha \\ 
8 K^2 (t) t^2 F_2 (t) + \alpha < 0 \\ 
\alpha + 8 K^2 (t) \left ( 1- t^2 F_1 (t) \right ) > 0 
 \end {array}
 \right. \, . \label {att}
 \ee
 The two functions 
\bee
8 K^2 (t) t^2 F_2 (t) \ \mbox { and } \ 8 K^2 (t) \left ( 1- t^2 F_1 (t) \right )
\eee
are monotone increasing functions such that 
\bee
\lim_{t\to 0} 8 K^2 (t) t^2 F_2 (t) = 0 \ \mbox { and } \ \lim_{t\to 1} 8 K^2 (t) t^2 F_2 (t) =+\infty 
\eee
and 
\bee
\Lcost := \lim_{t\to 0} 8 K^2 (t) \left ( 1- t^2 F_1 (t) \right ) =  2\pi^2 
\eee 
and 
\bee 
\lim_{t\to 1} 8 K^2 (t) \left ( 1- t^2 F_1 (t) \right ) =+\infty \, . 
\eee
Furthermore 
\bee
 8 K^2 (t) \left ( 1- t^2 F_1 (t) \right )-8 K^2 (t) t^2 F_2 (t) = 8K^2(t)(1-t^2) >0 \, , \forall t \in [0,1) \, . 
\eee
In conclusion: let $t_1$ be the unique solution to the equation 
\bee
8 K^2 (t_1) \left ( 1- t_1^2 F_1 (t_1) \right ) = - \alpha \, , 
\eee
and let $t_2$ be the unique solution to the equation 
\bee
8 K^2(t_2) t_2^2 F_2 (t_2) = - \alpha \, . 
\eee
Then, 
\be
t^M = 
\left \{
\begin {array}{ll}
0 & \mbox { if } -\Lcost \le \alpha < 0  \\
t_1 & \mbox { if } \alpha < -\Lcost   
\end {array}
\right. \ \mbox { and } \ t^m =t_2  \label {tm_neg}
\ee
proving thus Theorem \ref {Teo2} in the attractive case. 

We consider now the case of repulsive nonlinearity, i.e. $\alpha >0$. \ Then, condition $B>0$ implies that 
\be
K^2 (t) t^2 F_1(t) < \frac 18  \alpha \, . \label {alphaneg}
\ee
Furthermore, condition $A>-B$ reduces to 
\bee
K^2 (t) t^2 F_2 (t) > - \frac 18 \alpha \, , 
\eee
which is always satisfied. \ Condition $C_1^2 > 0$ becomes 
\bee
\alpha + 8 K^2(t) \left ( 1- t^2 F_1(t) \right ) > 0 
\eee
which is always satisfied, too. \ In conclusion, when $\alpha >0$ the quantization rule reads as
\bee
\left \{
\begin {array}{l}
{\E} = G(t) + \frac 32 \alpha \\ 
\alpha - 8K^2 (t) t^2 F_1(t) > 0 
\end {array}
\right. \, . 
\eee
We remark that the function $8K^2 (t) t^2 F_1(t)$ is a monotone increasing function such that 
\bee
\lim_{t \to 0} 8K^2 (t) t^2 F_1(t) = 0 \ \mbox { and } \ \lim_{t \to 1} 8K^2 (t) t^2 F_1(t) = +\infty \, . 
\eee
Let $t_3$ be the unique solution to the equation
\bee
8K^2 (t_3) t_3^2 F_1(t_3) =  \alpha \, , 
\eee
then 
\be
t^M =0 \ \mbox { and } \ t^m = t_3 \label {tm_pos}
\ee
completing so the proof of the Theorem \ref {Teo2}
\end {proof}
 
The allowed values for the energy ${\E}$ and for the quasimomentum $k$, as function of the nonlinearity parameter $\alpha$, are displaced in Figures \ref {Fig1} 
and \ref {Fig2}. \ In particular we plot the graph of the $\alpha$-dependent functions $\E^m$ and $\E^M $, and the graph of the functions 
\bee
k^m = \inf_{\E \in (\E^m , \E^M )} k(\E ) \ \mbox { and } k^M = \sup_{\E \in (\E^m , \E^M )} k(\E )\, . 
\eee
\begin{center}
\begin{figure}
\includegraphics[height=6cm,width=6cm]{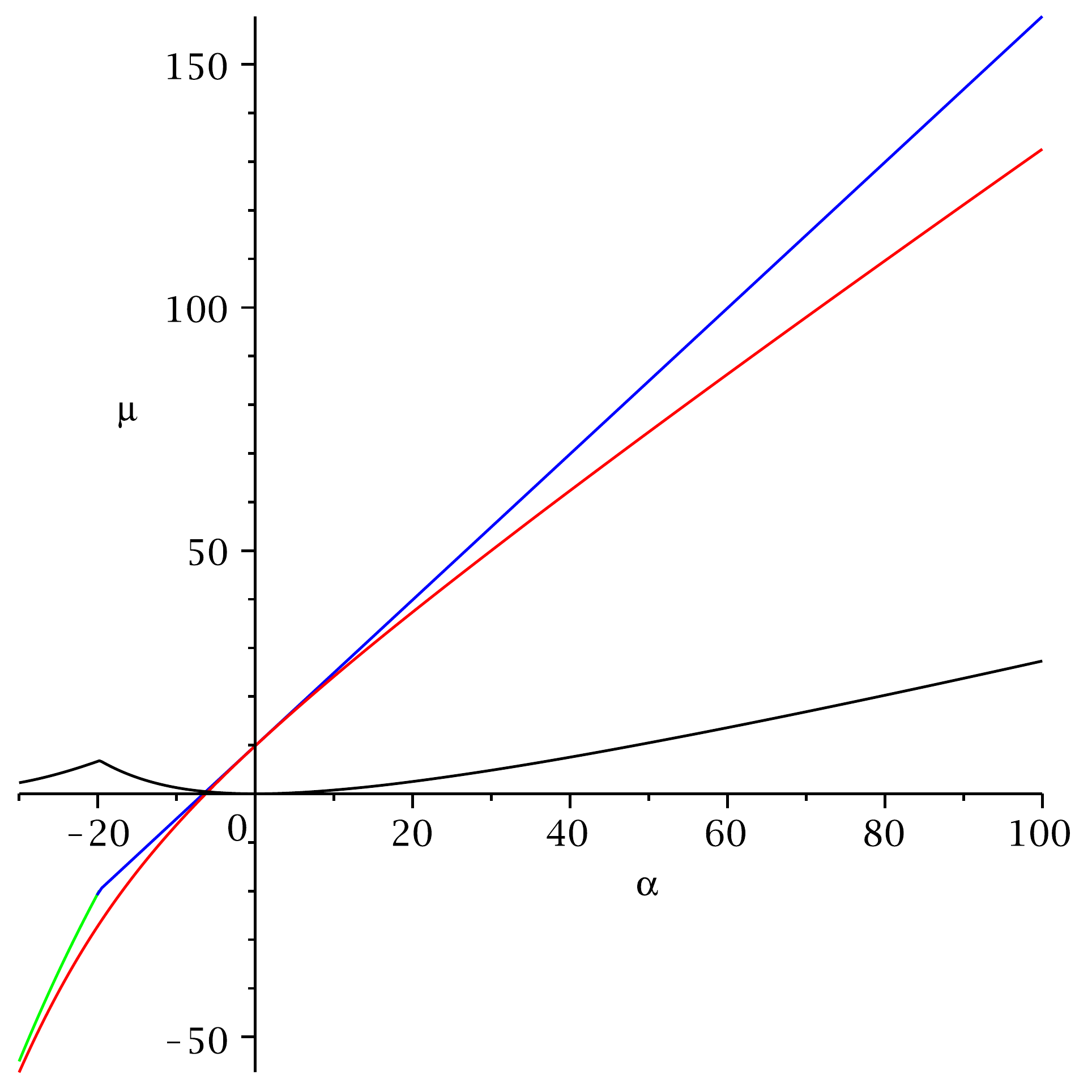}
\caption{\label {Fig1} We consider the cubic model for $\alpha \in [-30,+100]$. \ We plot the graph of the functions ${\E}^M (\alpha )$ 
(green line for $\alpha < - \Lcost$ and blue line for $\alpha >-\Lcost$) and ${\E}^m (\alpha )$ 
(red line). \ The allowed values for the energy ${\E}$ are the ones contained in the interval $(\E^m , \E^M )$. \ If we call $\E^M - \E^m$ the 
band width then it depends on $\alpha$ and it is zero only when $\alpha =0 $ (black line). \ At the edges of the band $(\E^m , \E^M )$ we have that $C_1=0$ 
along the green line; $C_1 \not=0$ and $A=0$ and $B=1$ along the blue line; and finally $A=-B$ along the red line for $\alpha <0$ and $b=0$ along the red line for $\alpha >0$.}
\end{figure}
\end{center}
\begin{center}
\begin{figure}
\includegraphics[height=6cm,width=6cm]{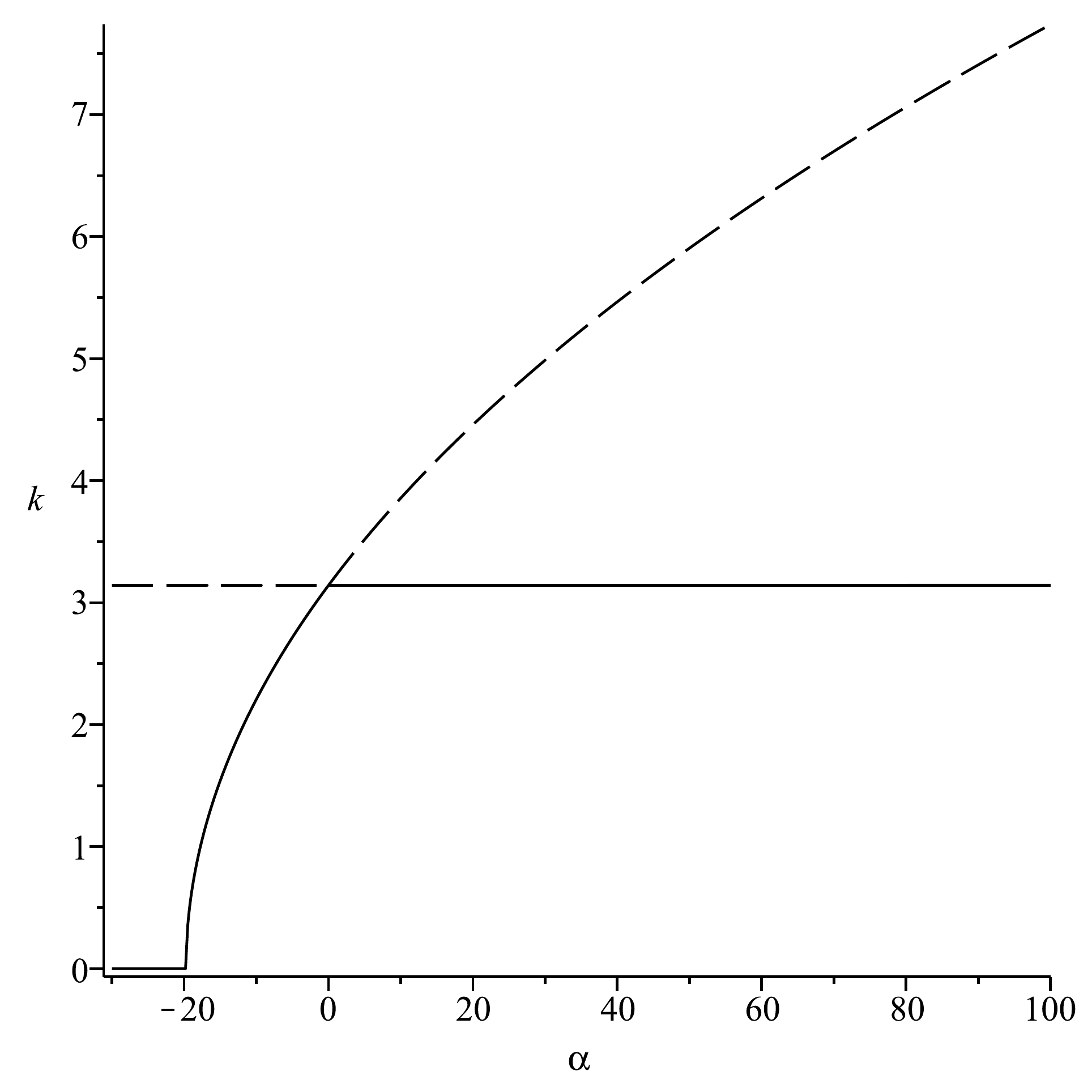}
\caption{\label {Fig2} We consider the cubic model for $\alpha \in [-30,+100]$. \ We plot the graph of the functions $k^m $ (full line) and $k^M$ 
(broken line). \ As proved in Theorem \ref {Teo3} $\lim_{\E \to \E^m} k (\E ) =\pi$.}
\end{figure}
\end{center}

There are three different behavior of the ``band function`` $\E (k)$ and of the associated solutions (see Figure \ref {Fig3}): when $\alpha < - \Lcost$ then the 
band function is defined for any $k \in (0,\pi )$; when $-\Lcost < \alpha <0$ then the band function is defined for any $k \in (k^m ,\pi )$ where $k^m \in (0, \pi)$; 
when $0 < \alpha $ then the band function is defined for any $k \in (\pi , k^M )$ where $k^M  >\pi$.

\begin{center}
\begin{figure}
\includegraphics[height=4cm,width=4cm]{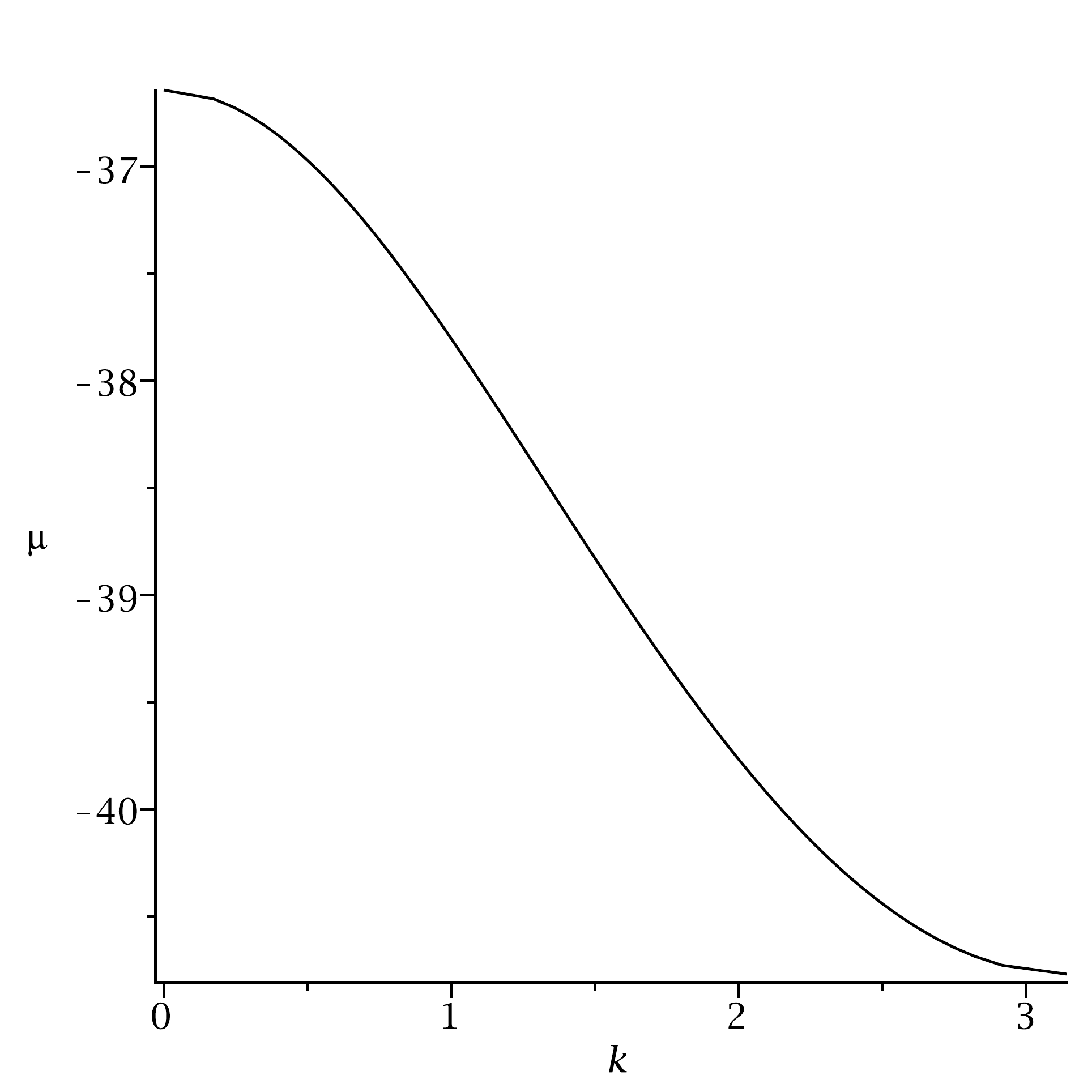}
\includegraphics[height=4cm,width=4cm]{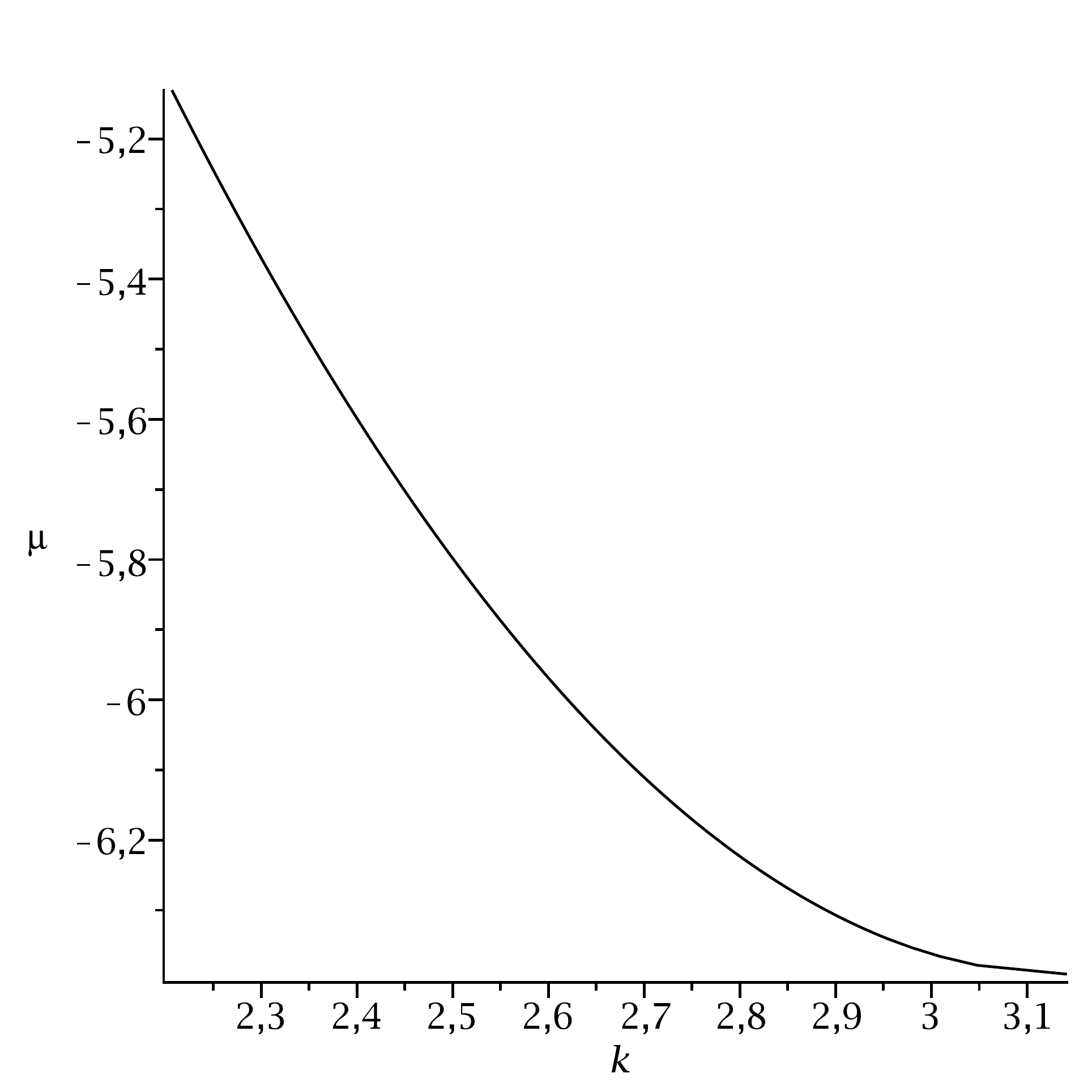}
\includegraphics[height=4cm,width=4cm]{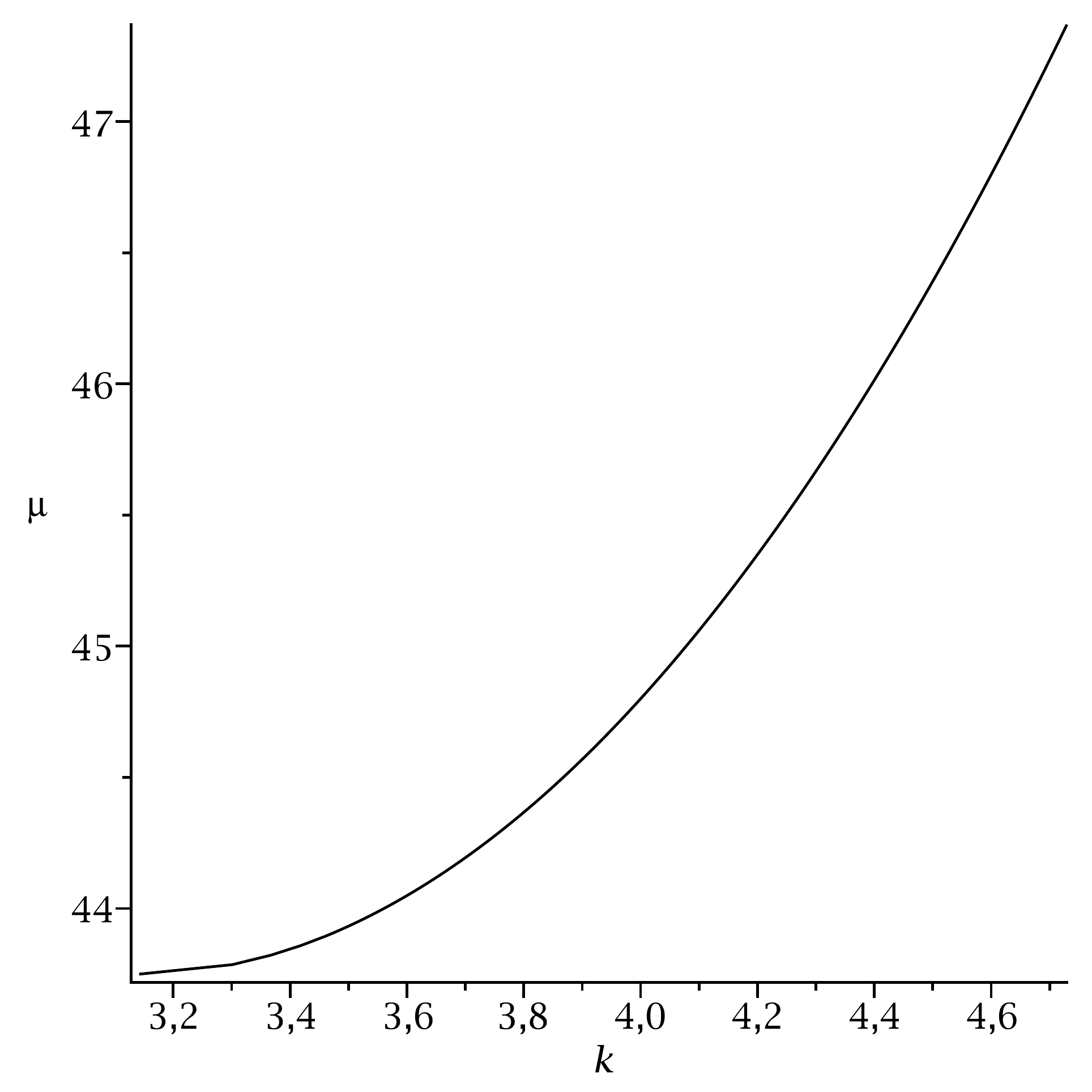}
\caption{\label {Fig3} Here we plot the graph of the band functions $\E (k)$ when $\alpha=-25$ (left hand side panel), $\alpha =-10$ (central side panel) and 
$\alpha =+25$ (right hand side panel).}
\end{figure}
\end{center}
\begin{center}
\begin{figure}
\includegraphics[height=6cm,width=6cm]{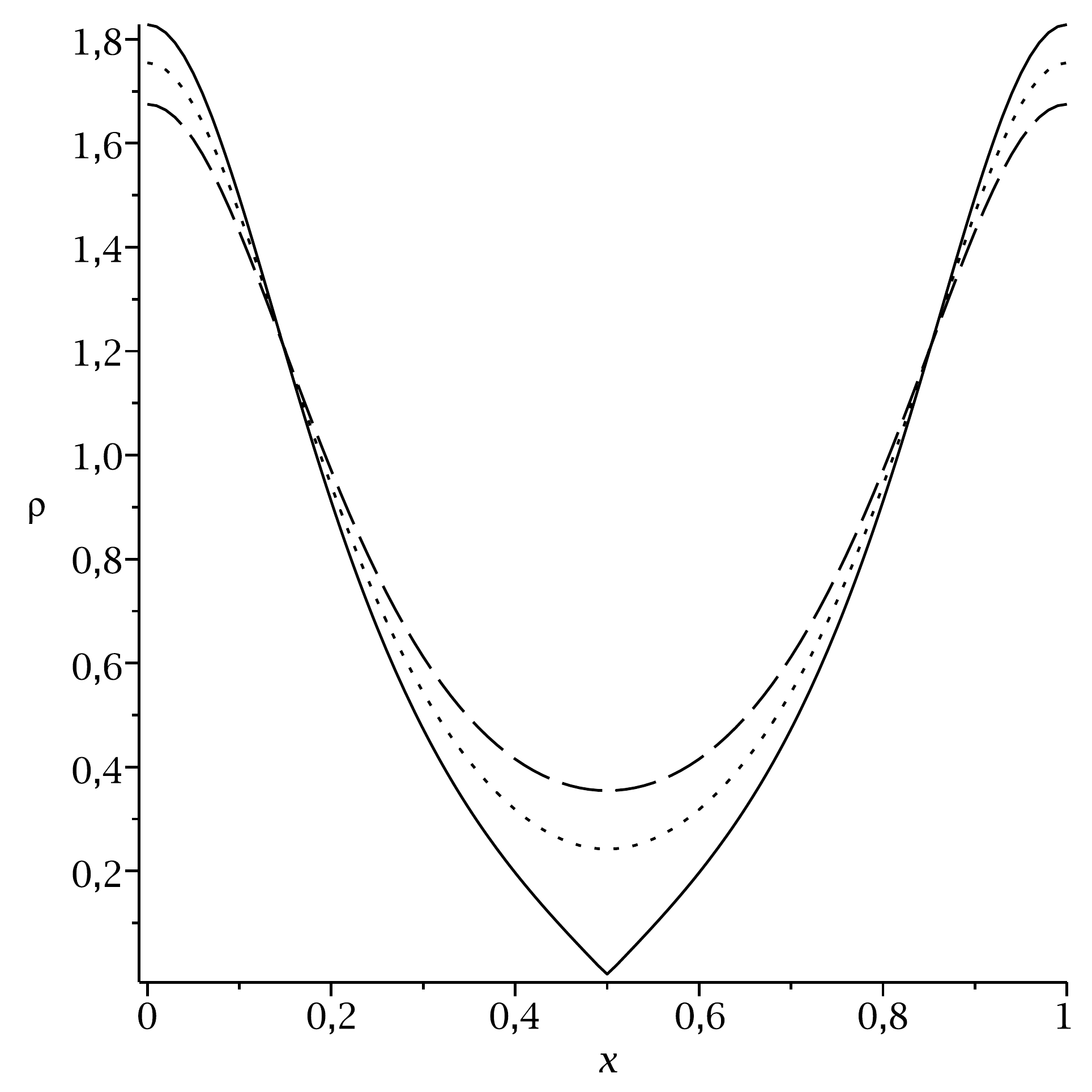}
\includegraphics[height=6cm,width=6cm]{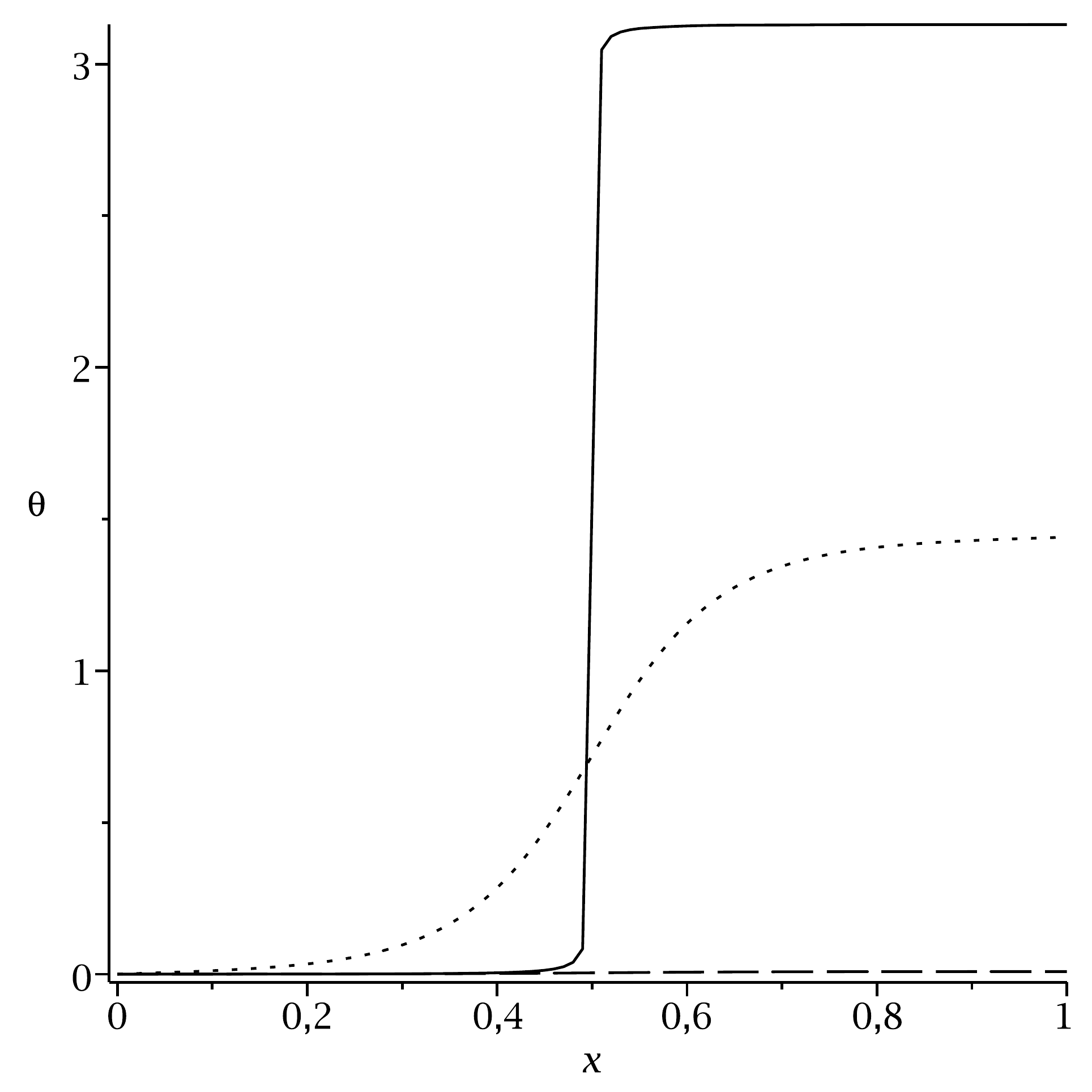}
\caption{\label {Fig9} Here we plot the graph of the solution $\rho (x)$ (left hand side) and $\theta (x)$ (right hand side) when $\alpha=-25$. \ Broken lines correspond 
to the solutions associated to an energy $\E$ close to the value $\E^M$; full lines correspond 
to the solutions associated to an energy $\E$ close to the value $\E^m$; dot lines correspond 
to the solutions associated to the energy $\E = \frac 12 (\E^m + \E^M)$.}
\end{figure}
\end{center}
\begin{center}
\begin{figure}
\includegraphics[height=6cm,width=6cm]{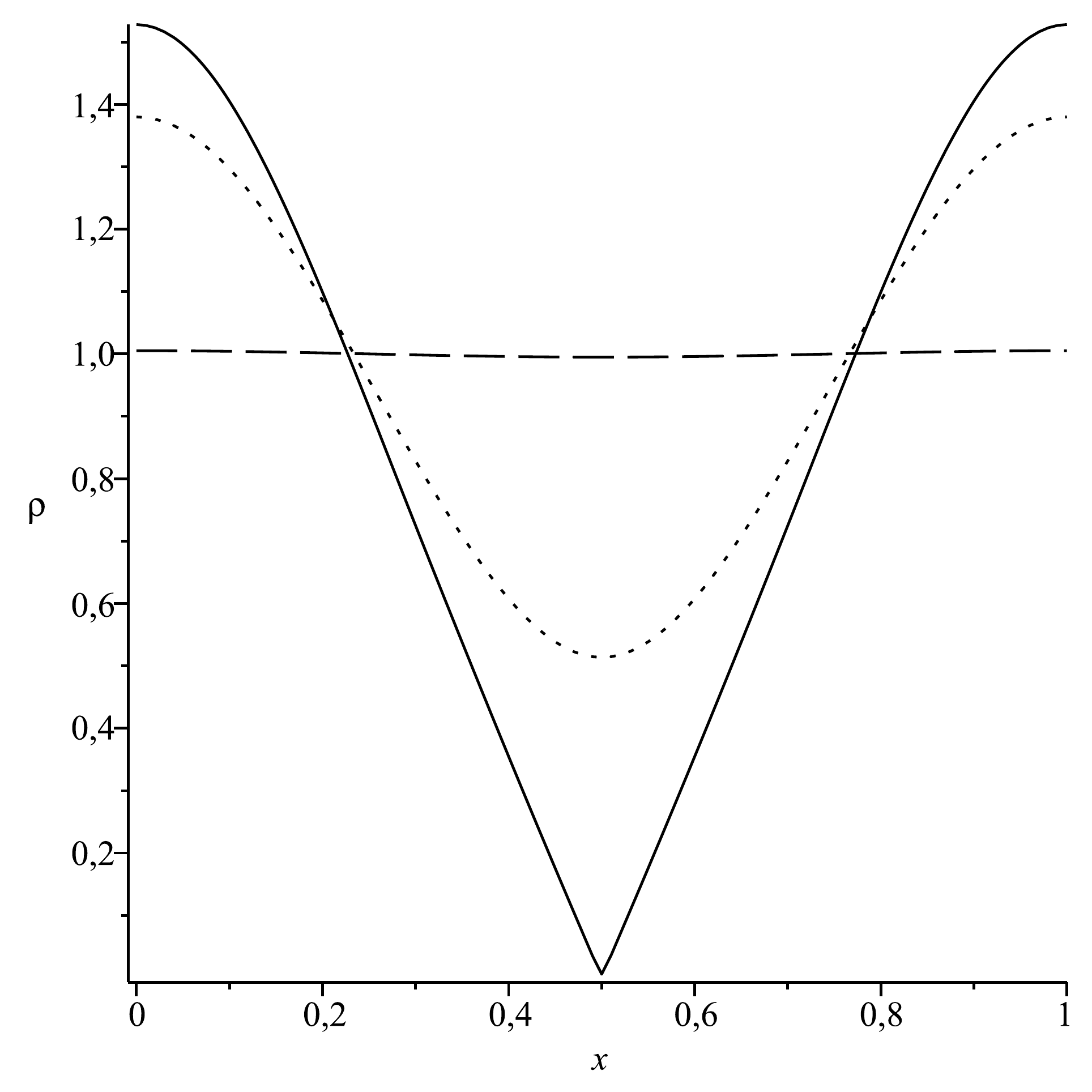}
\includegraphics[height=6cm,width=6cm]{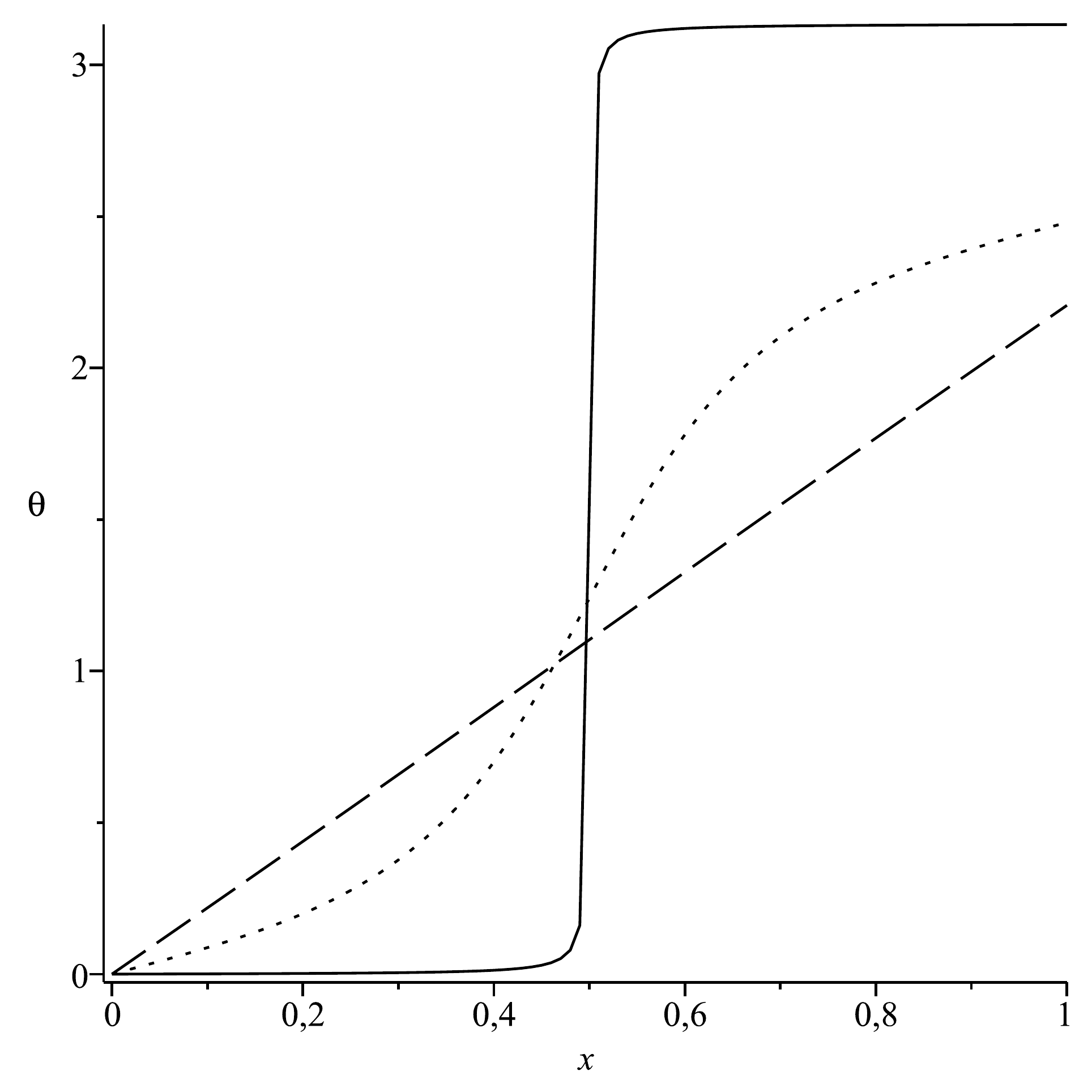}
\caption{\label {Fig10} Here we plot the graph of the solution $\rho (x)$ (left hand side) and $\theta (x)$ (right hand side) when $\alpha=-10$. \ Full lines correspond 
to the solutions associated to an energy $\E$ close to the value $\E^m$; broken lines correspond 
to the solutions associated to an energy $\E$ close to the value $\E^M$; dot lines correspond 
to the solutions associated to the energy $\E = \frac 12 (\E^m + \E^M)$.}
\end{figure}
\end{center}
\begin{center}
\begin{figure}
\includegraphics[height=6cm,width=6cm]{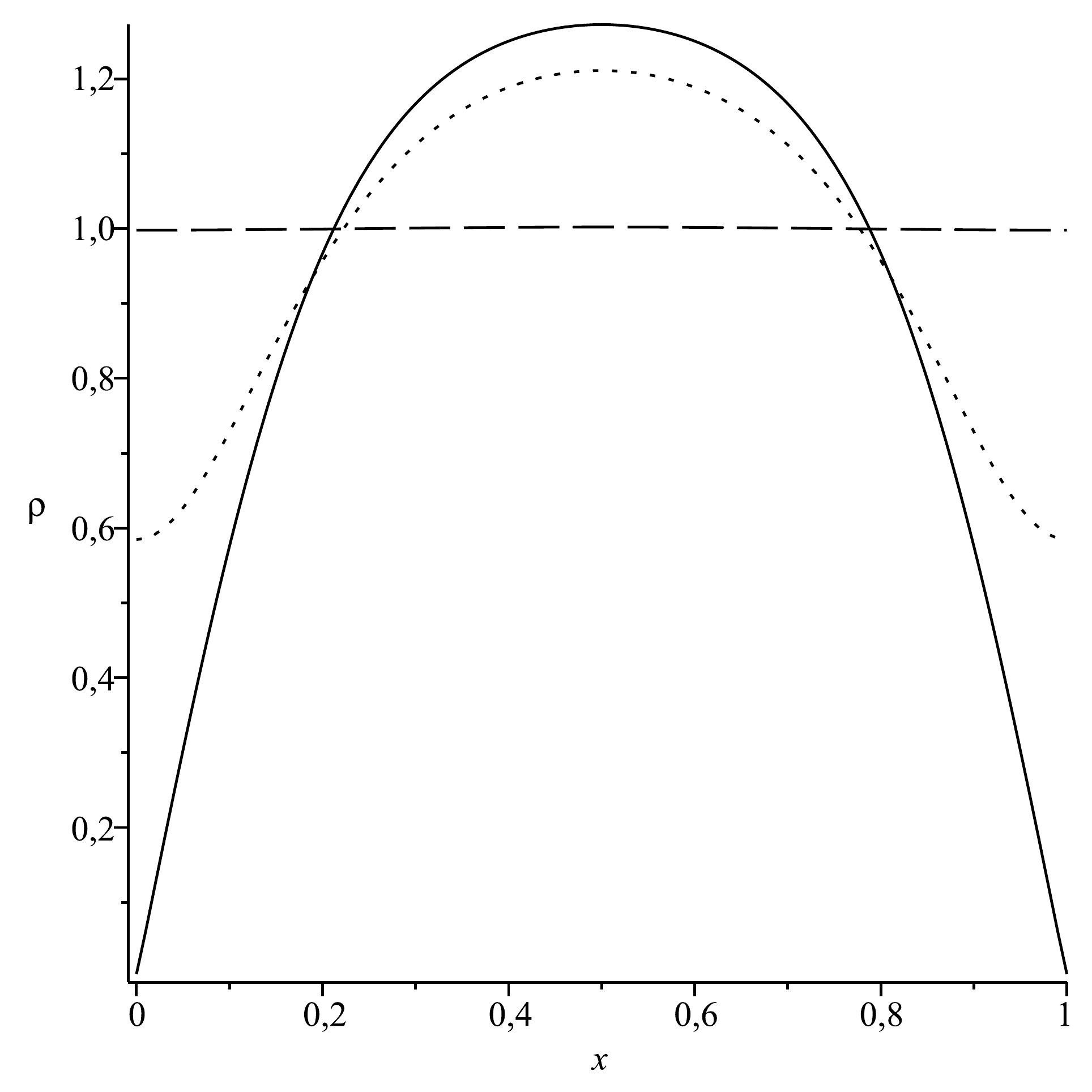}
\includegraphics[height=6cm,width=6cm]{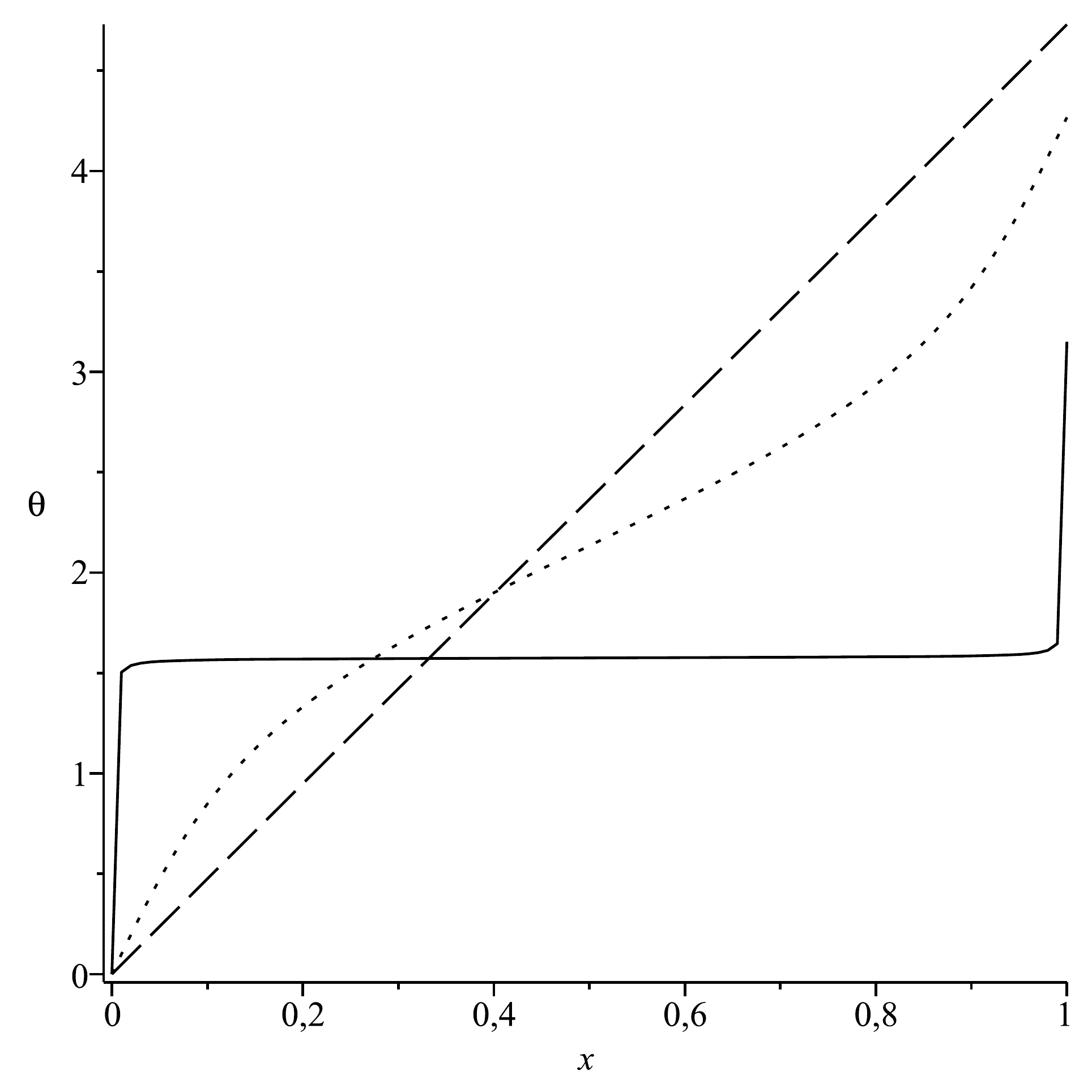}
\caption{\label {Fig11}Here we plot the graph of the solution $\rho (x)$ (left hand side) and $\theta (x)$ (right hand side) when $\alpha= +25$. \ Full lines correspond 
to the solutions associated to an energy $\E$ close to the value $\E^m$; broken lines correspond 
to the solutions associated to an energy $\E$ close to the value $\E^M$; dot lines correspond 
to the solutions associated to the energy $\E = \frac 12 (\E^m + \E^M)$.}
\end{figure}
\end{center}

\subsection {Behavior of the solution at the boundaries $\E^m$ and $\E^M$}

Here, we consider the behavior of the solution $\Npsi$ when $\E$ takes the boundary values $\E^m$ and $\E^M$. 

At first we consider the limit ${\E}\to {\E}^M$ where the proof of the Corollary below is an immediate conseguence of Theorem \ref {Teo2}. 

\begin {corollary} \label {Cor1}
If $\alpha \ge - \Lcost$, then $t^M \to 0$, $A\to 0$, $B \to  1$ and $k \to \sqrt {\alpha/2 
+ q^2 (0)}$ as ${\E}\to {\E}^M=\pi^2+ \frac 32 \alpha$; in such a limit the solution is a plane wave function (see Figures \ref {Fig10} and \ref {Fig11}, broken lines). 
%
%
\ If $\alpha < -\Lcost$, then $t^M \not= 0$ and 
  $C_1 \to 0$ as ${\E}\to {\E}^M$; hence, in this limit we have that 
 $\theta (x) \equiv 0$ and the solution has the form 
 \bee
 \Npsi (x) = C {\dn} (qx;t)
 \eee
 already discussed in (\ref {realeTer}) (see Figure \ref {Fig9}, broken line).
 \end {corollary}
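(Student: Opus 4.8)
The plan is to read everything off Theorem \ref{Teo2} by converting the statement ``$\E \to \E^M$'' into ``$t \to t^M$''. Since $\E = G(t) + \frac 32 \alpha$ with $G$ strictly decreasing on $[0,1)$ (Remark \ref{Inv}), the map $t \mapsto \E$ is a decreasing bijection of $[t^M, t^m]$ onto $[\E^m, \E^M]$; hence $\E \to \E^M$ is equivalent to $t \to t^M$, and I would open with this reduction so that all remaining claims become limits as $t \to t^M$ in the explicit formulas (\ref{q2})--(\ref{q5}).

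For the case $\alpha \ge -\Lcost$ (that is, either $-\Lcost \le \alpha < 0$ or $\alpha > 0$) Theorem \ref{Teo2} gives $t^M = 0$, so the relevant limit is $t \to 0^+$. Using $K(t) \to K(0) = \pi/2$ and $F_1(t) \to F_1(0)$ finite, formulas (\ref{q2}) and (\ref{q3}) give at once $A \to 0$ and $B \to 1$, whence $\rho = \sqrt{A\sn^2(qx;t) + B} \to 1$ uniformly. For the quasimomentum I would use $k = C_1 \int_0^1 \rho^{-2}(u)\,du$: the integral tends to $1$, while (\ref{q5}) with $B \to 1$, $A+B \to 1$ and $2\alpha B + 4q^2 \to 2\alpha + 4\pi^2$ (recall $q^2(0)=\pi^2$) yields $C_1^2 \to \frac 14(2\alpha + 4\pi^2) = \frac\alpha 2 + \pi^2$. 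Thus $k \to \sqrt{\alpha/2 + q^2(0)}$; note that this limit is non-negative precisely because $\alpha \ge -\Lcost$, a reassuring consistency check. Since $\rho \to 1$ and $\theta(x) = C_1\int_0^x \rho^{-2} \to kx$, the solution degenerates to the plane wave $e^{ikx}$ of Remark \ref{Nota1}.

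For the case $\alpha < -\Lcost$ Theorem \ref{Teo2} gives $t^M = t_1 \ne 0$, where $t_1$ solves $8K^2(t_1)\bigl(1 - t_1^2 F_1(t_1)\bigr) = -\alpha$. The point I would stress is that this is exactly the vanishing condition for $C_1$: rewriting (\ref{q5}) as $C_1^2 = \frac B2 (A+B)\bigl[\alpha + 8K^2(t)(1 - t^2 F_1(t))\bigr]$ (the identity $2\alpha B + 4q^2 = 2[\alpha + 8K^2(1-t^2F_1)]$ already appears in the proof of Theorem \ref{Teo2}), the bracket vanishes precisely at $t = t_1$, so $C_1 \to 0$ as $t \to t^M$. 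Consequently $\theta(x) = C_1\int_0^x \rho^{-2} \to 0$ and the limiting solution is real-valued.

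It then remains to identify this real solution with a $\dn$ profile, and this is the one genuinely algebraic step rather than a routine limit. I would show that at $t = t_1$ one has $A + t^2 B = 0$; indeed a short computation gives $A + t^2 B = \frac{t^2}{\alpha}\bigl[\alpha + 8K^2(t)(1 - t^2 F_1(t))\bigr]$, which is the very bracket that vanishes at $t_1$. Using $\dn^2(u;t) = 1 - t^2 \sn^2(u;t)$ this gives $z = A\sn^2(qx;t) + B = B\bigl(1 - t^2 \sn^2(qx;t)\bigr) = B\,\dn^2(qx;t)$, so $\Npsi = \rho = \sqrt{B}\,\dn(qx;t) =: C\,\dn(qx;t)$, the real solution of the form (\ref{realeTer}). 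The main (mild) obstacle is purely bookkeeping: verifying that the algebraic bracket controlling $C_1^2$ is literally the same expression forcing the $\sn^2$ solution to collapse onto a $\dn^2$ solution, so that $C_1 \to 0$ and the emergence of the $\dn$ profile are two faces of the single equation defining $t_1$.
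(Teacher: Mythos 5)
Your proposal is correct and follows the same route as the paper, which states the corollary without detailed proof as an immediate consequence of Theorem \ref{Teo2}; you simply supply the limits $t\to t^M$ in the explicit formulas (\ref{q1})--(\ref{q5}). Your identity $A+t^2B=\frac{t^2}{\alpha}\bigl[\alpha+8K^2(t)(1-t^2F_1(t))\bigr]$, showing that the vanishing of $C_1$ and the collapse of $A\sn^2+B$ onto $B\,\dn^2$ are both governed by the equation defining $t_1$, is a worthwhile detail that the paper leaves implicit, and all your computations (in particular $C_1^2\to\alpha/2+\pi^2$, non-negative exactly when $\alpha\ge-\Lcost=-2\pi^2$) check out.
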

 
We consider now the behavior of the solution when $\E$ takes the limit value ${\E}\to {\E}^m$. \ Even in this case the proof of the Corollary below is an immediate 
consequence of Theorem \ref {Teo2}.
 
\begin {corollary} \label {Cor2} If $\alpha <0$, then $A+B\to 0$ and $C_1\to 0$ as $\E \to \E^m$; in such a limit the solution has the form (see Figures \ref {Fig9} and \ref {Fig10}, full lines)
 \bee
 \Npsi (x) = C {\cn} (qx;t) 
 \eee
 already discussed in (\ref {realeBis}). \ If $\alpha >0$, then $B\to 0$ and $C_1 \to 0$ as $\E \to \E^m$; in such a limit the solution has the form (see Figure \ref {Fig11}, full line)
 \bee
 \Npsi (x) = C {\sn} (qx;t) 
 \eee
 already discussed in (\ref {reale}).
 \end {corollary}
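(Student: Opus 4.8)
The plan is to extract the limiting behaviour directly from the quantization rule (\ref{Eq21Bis}) analysed in Theorem \ref{Teo2}. Since $G$ is strictly monotone on $[0,1)$ (Remark \ref{Inv}), the relation $\E=G(t)+\frac32\alpha$ inverts to a continuous, strictly monotone function $t=t(\E)$, and as $\E\to\E^m$ one has $t(\E)\to t^m$. Because $q=2K(t)$, $F_1(t)$ and $F_2(t)$ remain finite for $t\in[0,1)$, the quantities $A(t)$, $B(t)$ and $C_1^2(t)$ defined by (\ref{q2}), (\ref{q3}) and (\ref{q5}) are continuous in $t$, so their limits as $\E\to\E^m$ are simply their values at $t^m$. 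The whole argument then reduces to identifying, in each sign regime, which of the strict inequalities in (\ref{Eq21Bis}) degenerates into an equality at $t^m$.

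For the attractive case $\alpha<0$, formula (\ref{tm_neg}) tells us that $t^m=t_2$ is the point where $8K^2(t)t^2F_2(t)+\alpha=0$, i.e. where the constraint $A>-B$ becomes the equality $A+B=0$; hence $A+B\to0$. Substituting this into (\ref{q5}), and using that the remaining factors $B$ and $2\alpha B+4q^2$ stay bounded as $t\to t^m$ (indeed $B>1$ when $\alpha<0$), we get $C_1^2\to0$, so $C_1\to0$. With $C_1=0$ the phase $\theta(x)=C_1\int_0^x\rho^{-2}(u)\,du$ vanishes and $\Npsi=\rho$ is real-valued. Putting $B=-A$ in $\rho^2=A\sn^2(qx;t)+B$ and using $\sn^2+\cn^2=1$ yields $\rho^2=-A\,\cn^2(qx;t)$; since $A<0$ here, this is exactly $\Npsi(x)=C\,\cn(qx;t)$ with $C=\sqrt{-A}$, the solution (\ref{realeBis}).

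For the repulsive case $\alpha>0$, formula (\ref{tm_pos}) tells us that $t^m=t_3$ solves $8K^2(t)t^2F_1(t)=\alpha$, which by (\ref{q3}) is precisely $B=0$; thus here the binding constraint is $B>0$. Then (\ref{q5}) gives $C_1^2\to0$ at once, the phase again vanishes and $\Npsi=\rho$ is real, while $\rho^2=A\sn^2(qx;t)+B\to A\sn^2(qx;t)$ with $A>0$, giving $\Npsi(x)=C\,\sn(qx;t)$ with $C=\sqrt{A}$, the solution (\ref{reale}).

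I do not expect a genuine obstacle: the content is the observation that at the lower band edge one constraint saturates, forcing the common vanishing factor in $C_1^2$. The only point requiring a word of care is that the convergence $C_1\to0$ is governed by this vanishing factor alone, which is guaranteed because $t^m\in[0,1)$ keeps $K(t)$, $F_1(t)$ and $F_2(t)$ — and hence the other factors in (\ref{q5}) — finite; everything else is algebraic substitution.
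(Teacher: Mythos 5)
Your proposal is correct and is precisely the argument the paper leaves implicit: the paper states only that Corollary \ref{Cor2} is ``an immediate consequence of Theorem \ref{Teo2}'', and your identification of the saturating constraint at $t^m$ ($A+B=0$ at $t_2$ for $\alpha<0$, $B=0$ at $t_3$ for $\alpha>0$) together with the factorization of $C_1^2$ in (\ref{q5}) is exactly how that consequence is obtained. No discrepancy with the paper's route.
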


 Concerning the quasimomentum we recall that 
\bee
k^m = 0 \ \mbox { if } \ \alpha \le -\Lcost 
\eee
and we can prove that 
\bee
k^M = \pi \ \mbox { if } \ \alpha <0 \ \mbox { and } \ k^m = \pi \ \mbox { if } \ \alpha >0 \, . 
\eee

\begin {theorem} \label {Teo3}
\be
\lim_{\E \to \E^m} k(\E ) = \pi \, . \label {Eqk}
\ee
\end {theorem}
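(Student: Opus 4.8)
The plan is to start from the defining relation $k=\theta(1)=C_1\int_0^1 \frac{dx}{\rho^2(x)}$ in (\ref{Eq4BIS}) and to convert it into an elliptic integral amenable to a Laplace-type asymptotic analysis. Substituting $u=qx$ and recalling $q=2K(t)$ from (\ref{q1}), so that $x\in[0,1]$ corresponds to $u\in[0,2K(t)]$, i.e. exactly one period of $\sn^2(\cdot;t)$, I would write
\[
k=\frac{C_1}{q}\int_0^{2K(t)}\frac{du}{A\,\sn^2(u;t)+B}.
\]
As $\E\to\E^m$ the monotonicity of $G$ (Remark \ref{Inv}) forces $t\to t^m\in(0,1)$, a fixed value, so that $K(t)$ and $q$ stay bounded away from $0$ and $\infty$ and the degeneration is carried entirely by the amplitudes. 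By Corollary \ref{Cor2} we have $C_1\to 0$ in both the attractive and the repulsive case, while simultaneously $\rho^2=A\sn^2+B$ develops a zero, so the integral diverges. The whole content of the theorem is that this $0\times\infty$ indeterminate form has the precise value $\pi$.

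Next I would treat the two cases separately, localizing the integral near the zero(s) of $\rho^2$. \emph{Repulsive case} $\alpha>0$: here $B\to 0^+$ (Corollary \ref{Cor2}) and $\rho^2=A\sn^2(u;t)+B$ vanishes at the two endpoints $u=0$ and $u=2K(t)$, where $\sn$ is linear ($\sn(u;t)\approx u$ near $u=0$, and by the symmetry $\sn(2K-u;t)=\sn(u;t)$ near the other endpoint). Approximating $\rho^2\approx B+Au^2$ near each zero and using $\int_0^\infty\frac{du}{B+Au^2}=\frac{\pi}{2\sqrt{AB}}$, the two half-peaks sum to $\frac{\pi}{\sqrt{AB}}$, while from (\ref{q5}) one has $C_1\approx q\sqrt{AB}$ as $B\to0$ (since $A+B\to A$ and $2\alpha B+4q^2\to 4q^2$). \emph{Attractive case} $\alpha<0$: here $\delta:=A+B\to 0^+$ (Corollary \ref{Cor2}), and writing $\rho^2=\delta-A\cn^2(u;t)$ with $-A>0$, the zero is the single interior point $u=K(t)$, where $\cn(u;t)\approx-\sqrt{1-t^2}\,(u-K)$. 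Approximating $\rho^2\approx\delta+(-A)(1-t^2)(u-K)^2$ gives a single peak of size $\frac{\pi}{\sqrt{(1-t^2)(-A)\delta}}$; from (\ref{q5}), together with $\alpha A=2q^2t^2$ (from (\ref{q2})) which turns $2\alpha B+4q^2\to-2\alpha A+4q^2=4q^2(1-t^2)$, one gets $C_1\approx q\sqrt{(1-t^2)(-A)\delta}$.

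In each case I would then substitute these asymptotics into $k=\frac{C_1}{q}\int\cdots$: the prefactors $\sqrt{AB}$ (repulsive) and $\sqrt{(1-t^2)(-A)\delta}$ (attractive) cancel between $C_1/q$ and the divergent integral, leaving exactly $\pi$, which is (\ref{Eqk}). The main obstacle is the rigorous justification of the Laplace asymptotics: I must show that replacing $\sn$ (resp. $\cn$) by its linear Taylor polynomial near the zero introduces an error negligible relative to the leading $1/\sqrt{\,\cdot\,}$ divergence, and that the contribution of the ``bulk'' of the interval, where $\rho^2$ stays bounded below by a positive constant, remains $O(1)$ and hence negligible against the $O(1/\sqrt{\,\cdot\,})$ peak. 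Concretely this amounts to bounding the integrand uniformly in the small parameter away from the peak and controlling the tails of the quadratic approximation; once these uniform estimates are in place the cancellation is exact and yields $\pi$. An alternative I would keep in reserve is to evaluate the integral exactly via the complete elliptic integral of the third kind $\Pi$ and invoke its known degeneration as the relevant parameter approaches the boundary, but the peak analysis is more transparent and avoids special-function identities.
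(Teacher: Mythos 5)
Your proposal is correct, and the asymptotic bookkeeping checks out in both cases: in the repulsive case $C_1\sim q\sqrt{AB}$ cancels against the two endpoint half-peaks $\frac{\pi}{\sqrt{AB}}$, and in the attractive case $C_1\sim q\sqrt{(-A)(1-t^2)\,\delta}$ cancels against the single interior peak $\frac{\pi}{\sqrt{(-A)(1-t^2)\,\delta}}$ at $u=K(t)$ (using $\cn'(K;t)=-\sqrt{1-t^2}$ and $\alpha A=2q^2t^2$ exactly as you do). However, your route is genuinely different from the paper's: what you relegate to a ``reserve alternative'' is precisely what the paper does. The paper first evaluates the integral in closed form via (\ref{terzo}), obtaining $k=\frac{\sqrt{(1+A/B)(2\alpha B+16K^2(t))}}{2K(t)}\,\Pi(1;-A/B,t)$, and then invokes the known degenerations of the complete elliptic integral of the third kind: in the attractive case the circular-case reduction (formula 412.01 of Byrd--Friedman) expressing $\Pi(1;\zeta,t)$ through Heuman's lambda function $\Lambda_0(\varphi,t)$, which gives $\sqrt{1-\zeta}\,\Pi(1;\zeta,t)\to\frac{\pi}{2\sqrt{1-t_2^2}}$ as $\zeta=-A/B\to 1^-$; in the repulsive case a direct computation of $\lim_{\kappa\to\infty}\sqrt{1+\kappa}\,\Pi(1;-\kappa,t)=\frac{\pi}{2}$. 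The trade-off is clear: the paper's argument is shorter and rigorous at the price of citing a special-function identity, while your peak analysis is self-contained and makes the mechanism of the $0\times\infty$ cancellation transparent, but it leaves as an acknowledged obligation the uniform error estimates (negligibility of the bulk where $\rho^2$ is bounded below, and control of the tails of the quadratic approximation of $\sn^2$, resp.\ $\cn^2$, near the zero). Those estimates are routine here because $t\to t^m\in(0,1)$ is fixed so $K(t)$ and $1-t^2$ stay bounded away from $0$ and $\infty$, but they must be written out for the proof to be complete; once they are, your argument is a valid and arguably more elementary substitute for the paper's.
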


\begin {proof}
In order to compute the quasimomentum $k(t)$ as function of the parameter $t$ we make use of equation (\ref {Eq4BIS}) and we refer to the formula (\ref {terzo}) 
in Appendix obtaining that 
\be
k &=& C_1 \int_0^1 \frac {1}{A\sn^2 (2 K(t) x;t)+B} dx = 2 C_1 \int_0^{1/2} \frac {1}{A\sn^2 (2 K(t) x;t)+B} dx \nonumber \\ 
&=& \frac {\sqrt {(1+A/B)(2\alpha B + 16 K^2(t))}}{2 K(t)} \Pi \left (1; -A/B,t \right ) \label {App2}
\ee

We consider then, at first, the attractive case $\alpha <0$; we have that
\be
k 
&=& \frac {\sqrt {2\alpha  + 16 K(t) {E}(1;t)}}{2 K(t)} \sqrt {1-\zeta }\Pi \left (1; \zeta ,t \right ) \label {App2a}
\ee
where we set $\zeta = -A/B <1$ because of the constraints $A+B>0$ and $B>0$. \ Now, from Corollary \ref {Cor2} it follows that $A+B\to 0$ as $\E \to \E^m$ in the 
attractive case; then we have to compute the following limit
\bee
\lim_{\zeta \to 1-0\, , \ t \to t_2} \sqrt {1-\zeta } \Pi \left (1; \zeta ,t \right ) \, . 
\eee
To this end we observe that $t_2 <1$ and we recall that (see formula (412.01) \cite {BF})
\bee
 \sqrt {1-\zeta } \Pi \left (1; \zeta ,t \right ) 
&=&  \sqrt {1-\zeta }  K(t) + \frac {\pi \sqrt {\zeta } \left ( 1- \Lambda_0 (\varphi , t) \right ) }{2 \sqrt {\zeta -t^2} } \, , 
\eee
when $t^2 < \zeta <1$ and where $\varphi = \sin^{-1} \left ( \sqrt {(1-\zeta )/t'} \right )$, $t'=\sqrt {1-t^2}$ and 
\bee
\Lambda_0 (\varphi , t) = \frac {2}{\pi} \left [ E(t) F(\varphi , t') + K(t) E(\varphi , t') - K(t) F(\varphi ,t' ) \right ] \, , 
\eee
here $E(t)$ denotes the complete elliptic integral of the second kind with parameter $t$ and $F(\varphi , t)$ denotes the normal elliptic integral of first kind with 
argument $\varphi$ and parameter $t$. \ Hence,
\bee
\lim_{\zeta \to 1-0\, , \ t \to t_2} \sqrt {1-\zeta } \, \Pi \left (1; \zeta ,t \right ) 
=   \frac {\pi \left ( 1- \Lambda_0 (0 , t_2) \right ) }{2 \sqrt {1 -t_2^2} } = \frac {\pi  }{2 \sqrt {1 -t_2^2} }
\eee
since $\varphi \to 0$ as $\zeta \to 1$. \ Hence, in such a limit we have that
\bee
k (\E^m) = \frac {\sqrt {2\alpha  + 16 K(t_2) {E}(1;t_2)}}{2 K(t_2)} \cdot \frac {\pi  }{2 \sqrt {1 -t_2^2} } = \pi
\eee
because $t_2$ is such that $8 K^2 (t_2) t_2^2 F_2 (t_2) =-\alpha$.

In order to give the proof in the repulsive case $\alpha >0$ me still make use of formula (\ref {App2}) in the form 
\be
k &=& \frac {\sqrt {2\alpha B + 16 K^2(t)}}{2 K(t)} 
\sqrt {1+\kappa } \Pi \left (1; -\kappa ,t 
\right )  \label {App2b}
\ee
where we set $\kappa = A/B >-1$. \ Now, from Corollary \ref {Cor2} it follows that $B\to 0$ as $\E \to \E^m$ in the repulsive case; 
then we have to compute the following limit for any $t<1$ fixed 
\bee
\lim_{\kappa \to +\infty } \sqrt {1+\kappa } \Pi \left (1; -\kappa ,t \right ) = \lim_{\kappa \to +\infty } \sqrt {1+\kappa } \int_0^1 
\frac {1}{(1+\kappa u^2) \sqrt {1-u^2} \sqrt {1-t^2 u^2 } }du =\frac 12 \pi \, .  
\eee
Hence, in such a limit we have that 
\bee
k (\E^m )= \frac {\sqrt {2\alpha  + 16 K(t_3) {E}(1;t_3)}}{2 K(t_3)} \cdot \frac 12 \pi = \pi
\eee
because $t_3$ is such that $8 K^2 (t_3) t_3^2 F_1 (t_3) =\alpha$.
\end {proof}

\begin {remark}
When $\alpha$ is small enough then $t^M =0$ and $\E^M = \pi^2 + \frac 32 \alpha$. \ Furthermore, in the limit $\alpha \to 0$ a straightforward calculus gives that 
\bee
t_2 := t_2 (\alpha ) \sim \sqrt {-\frac {2\alpha }{\pi }}
\ \mbox { and } \ 
t_3 := t_3 (\alpha ) \sim \sqrt {\frac {2\alpha }{\pi }} \, .
\eee
In such a case the limit of the solution as $\alpha$ goes to zero becomes the plane wave solution discussed in Remark \ref {Nota1} associated to $k=\pi$.
\end {remark}

\appendix

\section {Some formulas concerning Jacobian Elliptic functions} \label {AppA}

In order to give an explicit expression to the function $F_1(t)$ we recall that (see Formula 310.02 \cite {BF})
\be
F_1(t) = 2   \int_0^{1/2} {\sn}^2 (qx;t) dx = \frac {q-2 E \left (  {\sn} (q/2;t);t \right )}{qt^2} 
= \frac {q-2 E \left ( 1;t \right )}{qt^2} \, , \label {App1}
\ee
since $\sn (q/2;t) =1$ when $q=2 K(t)$, where $E (\varphi ;t)$ is the incomplete elliptic integral of second kind with argument $\varphi$ and parameter $t$. 

Here, we collect some results concerning the elliptic integral of third kind defined as 
\bee
\Pi  (z;\nu , t) = \int_0^z \frac {1}{(1-\nu u^2) \sqrt {1-u^2} \sqrt {1-t^2 u^2 } }du \, . 
\eee
Recalling that $\frac {d }{dx} \sn (x;t) = \cn (x;t) \dn (x;t)$ and that $\sqrt {1-\sn^2 (x;t)} = |\cn (x;t)|$ and $\sqrt {1-t^2\sn^2 (x;t)} = \dn (x;t)$ then 
\be
\int \frac {1}{A\sn^2 (x;t)+B} dx = \frac {1}{B} \Pi \left (\sn (x;t); -A/B,t \right ) \label {terzo}
\ee
provided that $x \in [0,K(t)]$ because $|\cn (x;t)| = \cn (x;t)$, $B\not= 0$, $-A/B < 1$ and $t\in [0,1)$. 

\begin{thebibliography}{99}

\bibitem {AS} Abramowitz M, and Stegun I A, {\it Handbook of Mathematical Functions with Formulas,
Graphs, and Mathematical Tables}, (New York: Dover) (1970).

\bibitem {Angulo} Angulo J, {\it Non-linear stability of periodic traveling-wave equation for the Schr\"odinger and modified Korteweg-de Vries equation}, J. of 
Differential Equations {\bf 235}, 1 (2007).

\bibitem {Bie} Biermann G G A, {\it Problemata quaedam mechanica functionum ellipticarum ope soluta}, Dissertation Inauguralis 1865 (Berlin).

\bibitem {BR} Brand J, and Reinhardt W P, {\it Generating a ring currents, solitons, and svortices by
stirring a Bose-Einstein condensate in a toroidal trap}, J. Phys. B: At. Mol. Phys. {\bf 34}, L113 2001.

\bibitem {BF} Byrd P F, and Friedman M D, {\it Handbook of elliptic integrals for engineers and physicists}, Springer-Verlag Berlin (1954).

\bibitem {Carr1} Carr L D, Clark C W, and Reinhardt W P, {\it Stationary solutions of the one-dimensional nonlinear Schr\"odinger equation. I. Case of 
repulsive nonlinearity}, Phys. Rev. A {\bf 62}, 063610 (2000).

\bibitem {Carr2} Carr L D, Clark C W, and Reinhardt W P, {\it Stationary solutions of the one-dimensional nonlinear Schr\"odinger equation. II. Case of 
attractive nonlinearity}, Phys. Rev. A {\bf 62}, 063611 (2000).

\bibitem {Davis} Davis H T, {\it Introduction to nonlinear differential and integral equations}, Dover Publications (1962).

\bibitem {Fibich} Fibich G, {\it The Nonlinear Schr\"odinger Equation: Singular Solutions and Optical Collapse}, Springer Verlag (2016).

\bibitem {Gallay1} Gallay T, and H\u ar\u agu\c s M, {\it Stability of small periodic waves for the nonlinear Schr\"odinger equation}, J. of 
Differential Equations {\bf 234}, 544 (2007).

\bibitem {Gallay2} Gallay T, and H\u ar\u agu\c s M, {\it Orbital stability of periodic waves for the nonlinear Schr\"odinger equation}, J. of Dyn. 
Diff. Eqns {\bf 19}, 825 (2007).

\bibitem {Gupta} Gupta S, Murch K W, Moore K L, Purdy T P, and Stamper-Kurn D M, {\it Bose-Einstein condensation in a circular waveguide}, Phys. Rev. Lett. 
{\bf 95} (14):143201 (2005).

\bibitem {Kohn} Kohn W, {\it Analytic properties of Bloch waves and Wannier functions}, Physical Review {\bf 115} 809 (1959).

\bibitem {LandauWilde} Landau L J, and Wilde I F, {\it On the Bose-Einstein condensation of an ideal gas}, Commun. Math. Phys. {\bf 70}, 43 (1979).

\bibitem {Morizot} Morizot O, Colombe Y, Lorent V, Perrin H, and Garraway B M, {\it Ring trap for ultracold atoms}, Phys. Rev. A {\bf 74}, 023617 (2006).

\bibitem {P} Pelinovsky D E, {\it Localization in periodic potentials; from Schrödinger operators to the Gross-Pitaevskii
equation}, London Mathematical Society, Lecture Note Series 390. Cambridge University Press, Cambridge (2011).

\bibitem {POC} P\'erez-Obiol A, and Cheon T., {Bose-Einstein condensate confined in a 1D ring stirred with a rotating delta link}, preprint arXiv:1907.04574 (2019).

\bibitem {Robinson} Robinson D W, {\it Bose-Einstein Condensation with Attractive Boundary Conditions}, Commun. Math. Phys. {\bf 50}, 53 (1976).

\bibitem {Rowlands} Rowlands G, {\it On the stability of solutions of nonlinear Schr\"odinger equation}, IMA J. Appl. Math. {\bf 13}, 367 (1974).

\bibitem {Carr3} Seaman B T, Carr L D, and Holland M J, {\it Effect of a potential step or impurity on the Bose-Einstein condensate mean field}, Phys. 
Rev. A {\bf 71} 033609 (2005).

\bibitem {WW} Whittaker E T and Watson G N, {\it A Course of Modern Analysis} (Cambridge: Cambridge University Press), 1927.

\end {thebibliography}

\end {document}